\documentclass[11pt]{article}

\usepackage[letterpaper,margin=2.3cm]{geometry}
\usepackage{cite}
\usepackage{framed}
\usepackage[framemethod=tikz]{mdframed}
\usepackage{appendix}
\usepackage{graphicx}
\usepackage{xcolor}
\usepackage{varwidth}
\usepackage{wrapfig}
\usepackage{algorithm}
\usepackage[noend]{algpseudocode}
\usepackage[textsize=tiny]{todonotes}
\usepackage{array}

\usepackage{paralist}
\usepackage{amsthm}
\usepackage{amsmath,amssymb}

\usepackage{enumitem}
\setitemize{noitemsep,topsep=3pt,parsep=3pt,partopsep=3pt}

\usepackage{xspace}
\usepackage{xifthen}

\definecolor{darkgreen}{rgb}{0,0.5,0}
\definecolor{darkblue}{rgb}{0,0,0.8}
\usepackage{hyperref}
\hypersetup{
    unicode=false,          
    colorlinks=true,        
    linkcolor=darkblue,          
    citecolor=darkgreen,        
    filecolor=magenta,      
    urlcolor=cyan           
}
\RequirePackage[]{silence}
\WarningsOff[hyperref]

\usepackage[nameinlink,capitalize]{cleveref}

\newcommand{\calE}{\ensuremath{\mathcal{E}}}

\newcommand{\TC}{\ensuremath{\mathsf{TC}}}

\newcommand{\ignore}[1]{}

\newtheorem{theorem}{Theorem}[section]
\newtheorem{lemma}[theorem]{Lemma}

\newtheorem{definition}{Definition}[section]

\algnewcommand\algorithmicswitch{\textbf{switch}}
\algnewcommand\algorithmiccase{\textbf{case}}

\algdef{SE}[SWITCH]{Switch}{EndSwitch}[1]{\algorithmicswitch\ #1\ \algorithmicdo}{\algorithmicend\ \algorithmicswitch}%
\algdef{SE}[CASE]{Case}{EndCase}[1]{\algorithmiccase\ #1}{\algorithmicend\ \algorithmiccase}%
\algtext*{EndSwitch}%
\algtext*{EndCase}%

\newcommand{\ID}{\ensuremath{\mathrm{ID}}\xspace}

\newcommand{\set}[1]{\left\{#1\right\}}

\newcommand{\hide}[1]{}



\newcommand{\FullOrShort}{full}

\ifthenelse{\equal{\FullOrShort}{full}}{
	
  \newcommand{\onlyLong}[1]{#1}
  \newcommand{\onlyShort}[1]{}

}{
  
  \newcommand{\onlyShort}[1]{#1}
  \newcommand{\onlyLong}[1]{}

}


%
\title{The Communication Cost of Information Spreading\\ in\\ Dynamic Networks}
\author{
Mohamad Ahmadi\\
	\small{University of Freiburg}\\
	\small{mahmadi@cs.uni-freiburg.de}
\and
Fabian Kuhn\\
	\small{University of Freiburg}\\
	\small{kuhn@cs.uni-freiburg.de}
\and
Shay Kutten\\
	\small{Technion}\\
	\small{kutten@ie.technion.ac.il}
\and
Anisur Rahaman Molla\\
	\small{NISER}\\
	\small{molla@niser.ac.in}
\and
Gopal Pandurangan\\
	\small{University of Houston}\\
	\small{gopalpandurangan@gmail.com}
}

\date{}

\begin{document}

\maketitle              

\thispagestyle{empty}

\begin{abstract}
   This paper investigates the message complexity of distributed {\em
     information spreading} (a.k.a {\em gossip or token
     dissemination}) in adversarial dynamic networks. While
   distributed computations in dynamic networks have been studied
   intensively over the last years, almost all of the existing work
   solely focuses on the time complexity of distributed algorithms.

   In information spreading, the goal is to spread $k$ tokens of
   information to every node on an $n$-node network. We consider the
   {\em amortized} (average) message complexity of spreading a token,
   assuming that the number of tokens is large. In a static network,
   this basic problem can be solved using (asymptotically optimal)
   $O(n)$ amortized messages per token. Our focus is on
   token-forwarding algorithms, which do not manipulate tokens in any
   way other than storing, copying, and forwarding them.

   We consider two types of adversaries that have been studied extensively in dynamic networks: {\em adaptive} and {\em oblivious}. The adaptive {\em worst-case} adversary provides a dynamic sequence of network graphs under the assumption that it is aware of the
   status of all nodes and the algorithm (including the current random
   choices) and can rewire the network arbitrarily in every round with
   the constraint that it always keeps the $n$-node network
   connected.  On the other hand, the {\em oblivious} adversary  is a worst-case adversary
   that is oblivious to the random choices made
   by the algorithm.  The message complexity of information spreading is not yet fully understood in these models.  In particular, the central question that motivates our work is whether one can achieve subquadratic
   amortized message complexity for information spreading.
   
   We present two sets of results depending on how nodes
   send messages to their neighbors:

   \noindent 1. {\em Local broadcast:} We show a tight
   lower bound of $\Omega(n^2)$ on the number of amortized local
   broadcasts, which is matched by the naive flooding algorithm.

    \noindent 2. {\em Unicast:}  We study the message complexity 
    as a function of the number of dynamic changes in the
   network. To facilitate this, we introduce a natural  complexity measure for analyzing dynamic networks called
   {\em adversary-competitive message complexity}  where the adversary pays a
   unit cost for every topological change.  Under this model, it is
   shown that if $k$ is sufficiently large, we can obtain an optimal
   amortized message complexity of $O(n)$.  
   We also
   present a randomized algorithm that achieves {\em subquadratic}
   amortized message complexity when the number of tokens is
   not large
   under an {\em oblivious adversary}. Our analysis of the unicast communication under
   the adversary-competitive model (which may be of independent interest) is a main contribution
   of this paper. 
   
   Our work is a step towards fully understanding the message complexity of information spreading in dynamic networks.
\end{abstract}
\vspace{.5cm}

 \onlyShort{\vspace{-0.1in}}
\onlyLong{\section{Introduction}}
\onlyShort{\section{Introduction and Related Work}}
\label{sec:introduction}
     \onlyShort{\vspace{-0.1in}}

Many modern distributed communication networks such as ad hoc wireless, sensor, and mobile networks, overlay and peer-to-peer (P2P) networks are inherently dynamic (suffer from a  high rate of connections and disconnections) and
bandwidth-constrained.  Hence, understanding the possibilities and limitations of distributed computation in dynamic networks 
has been a major goal in recent years. 

In this paper, we study the fundamental problem of information
spreading on (synchronous) dynamic networks.  This problem was
analyzed for static networks by Topkis~\cite{topkis85}, and was in
particular studied on dynamic networks by Kuhn, Lynch, and
Oshman~\cite{kuhn-stoc10}.  In the information spreading problem (also
called {\em $k$-gossip or $k$-token dissemination}), there are $k$
pieces of information (tokens) that are initially present in some
nodes and the problem is to disseminate the $k$ tokens to all the $n$
nodes in the network, under the bandwidth constraint that one token
can go through an edge per round.  This problem is a fundamental
primitive for distributed computing; indeed, solving $n$-gossip, where
each node starts with exactly one token, allows any function of the
initial states of the nodes to be computed, assuming the nodes know
$n$~\cite{kuhn-stoc10}.

The dynamic network models that we consider in this paper allow a worst-case
adversary  known as {\em strongly adaptive}   that can choose any communication links among the nodes for each
round, with the only constraint being that the resulting communication
graph be connected in each round; this adversary can choose the links with the knowledge of the tokens that any node
can send in that round as well as its random choices (in one of the results we also consider an oblivious adversarial model). Our adversarial models are  closely related to those adopted in recent
studies (e.g.,  see
~\cite{avin08,santoro,sarmaMP15,Dutta13,kuhn-stoc10,odell05}). We
distinguish two variants of the basic model, depending on whether
nodes communicate by \emph{local broadcast} (i.e., a node always sends the same
message to all its neighbors) or whether we allow nodes to do \emph{unicast
communication} (i.e., nodes can possibly send different messages to
different neighbors in the same round). For more information on the model, we refer
to Section \ref{sec:model}. We note that most of the prior work (e.g.,
\cite{Dutta13,kuhn-stoc10,odell05}) only considered communication by
local broadcast.

The focus of the present paper is on {\em
  token-forwarding} algorithms, which do not manipulate tokens in any
way other than storing, copying, and forwarding them.
Token-forwarding algorithms are simple and easy to implement and have
been widely studied (e.g, see~\cite{leighton:book,peleg}).  The paper
investigates the {\em message complexity} of token-forwarding
algorithms for information spreading. Message complexity---the total
number of messages sent by all nodes during the course of an
algorithm---is an important performance measure. It directly relates
to the cost of communication, which is a dominant cost in many
real-world settings (e.g., it is correlated to energy, power, etc.\ in
wireless networks).  While information spreading in dynamic networks
have been studied intensively over the last years, almost all of the
existing work (e.g.,
\cite{gopal-disc16,avin08,sarmaMP15,Dutta13,HK12,kuhn-stoc10}) solely
focuses on the {\em time (round) complexity} of distributed
algorithms. (However, some works that focus on time complexity imply  bounds on messages --- see
e.g., \cite{BCF11,CCDFIPPS13,CS15}.) In many cases, the currently best algorithms for
information spreading in adversarial dynamic networks have a high
message complexity and in many cases, a high time complexity as well.
In contrast, in this paper,  we are interested in the {\em amortized} message complexity of information spreading, i.e.,
the average message cost of spreading $k$ tokens (when $k$ is large) in a dynamic network. To the best of our knowledge, this  aspect has not been studied in prior works on information spreading in dynamic networks (\onlyLong{cf. Section \ref{sec:related}}\onlyShort{cf. Section ``Related Work and Comparison'' in the full paper \cite{AKKMP18}}).

In any $n$-node static network, a simple token-forwarding algorithm
that pipelines token transmissions up a rooted spanning tree, and then
broadcasts them down the tree completes $k$-gossip in $O(n + k)$
rounds~\cite{peleg}, which is clearly asymptotically tight because the
diameter of the network might be $\Theta(n)$ and because every node has to
receive $k$ different tokens. In fact, $O(n+k)$ rounds are even
sufficient if in each round, each node forwards an arbitrary not yet
forwarded token to each of its neighbors \cite{topkis85}.  In a
dynamic network, it is known that under a strongly adaptive adversary
and if the communication is via local broadcast, the $O(n+k)$ bound
cannot be achieved; Dutta et al.~\cite{Dutta13} (see also \cite{HK12})
showed that $\Omega(nk/\log (nk) +n)$ rounds are necessary.  This
bound is essentially tight (up to a logarithmic factor), since one can
easily achieve an upper bound of $O(nk)$ by flooding.  We do not know
any tight bounds on the time complexity for {\em unicast} communication.

With regard to messages, we are interested in the {\em amortized}
(average) message complexity of spreading a token. In a static
network, one can first build a spanning tree (which can take as much
as $\Omega(n^2)$ messages\footnote{This bound is true in the KT0 model where nodes do not have initial knowledge
of their neighbors' IDs. On the other hand, in the KT1 model, where each node has initial knowledge of the IDs of their respective
neighbors, it is possible to build a spanning tree in $O(n \text{ polylog}(n))$ messages \cite{king}. Note that this distinction is not very important
in the amortized setting in a static network, since in both cases the amortized message complexity is $O(n)$ if $k = \Omega(n)$.
In the dynamic setting, we essentially assume the KT1 model for unicast communication, whereas for broadcast communication, the distinction is not important, see Section \ref{sec:model} for more details.}  in graphs with $\Theta(n^2)$
edges \cite{kutten-jacm15}), and then using the spanning tree edges to
disseminate the tokens to all nodes; this takes $O(n^2+nk)$ messages
overall or $O(n^2/k + n)$ amortized messages per token. If $k$ is
sufficiently large\footnote{There are natural applications where $k$ is large,
  e.g., if all nodes have tokens to broadcast or if some node has
  a stream of messages as, for example, in audio/video transmissions.}, say at least $n$, then
the above bound gives $O(n)$ amortized messages per token, which is
optimal (since each node has to receive the token). On the other hand,
for dynamic networks, the situation is far less clear. In the case of
{\em local broadcast} communication (where each broadcast is counted
as one message\footnote{This is reasonable, especially, in the context
  of wireless networks where nodes communicate by local broadcast.}), an
$O(n^2)$ amortized message upper bound per token is
straightforward to obtain by using flooding (each
node broadcasts each token for $n$ rounds). For unicast communication
(cf.~Section~\ref{sec:model}), again an $O(n^2)$ amortized upper bound
is easy to obtain (each node sends each token at most once to each
other node; note that for unicast communication each message to a neighbor is counted as one message). In both cases, 
non-trivial lower bounds are not known.  Thus, the {\em central question
  that we seek to address in this work is whether one can achieve
  $o(n^2)$ or even asymptotically optimal
  $O(n)$ amortized message complexity when $k$ is large (for both
  the local broadcast and the unicast settings)}. We note that prior works (including \cite{gopal-disc16,avin08,sarmaMP15,Dutta13,HK12,kuhn-stoc10}) do not address this question.
  
 \onlyShort{\vspace{-0.2in}}
\subsection{Our Main Results} 
     \onlyShort{\vspace{-0.1in}}
In the local broadcast setting, we give a negative answer to the above
question and show that with a strongly adaptive adversary, the
$\Theta(n^2)$ amortized message complexity bound of the naive
algorithm is indeed necessary
(cf.~Section~\ref{sec:local-broadcast}). This ``bad'' bound for local broadcast is a  motivation for considering the (more challenging) unicast setting. For the unicast setting, we
study how the message complexity behaves as a {\em function of the
  number of dynamic changes in the network}. To facilitate this, we introduce a new and natural  complexity measure for analyzing dynamic networks called
   {\em adversary-competitive message complexity} (cf. Definition \ref{def:advcomp_msgcompl}).  While the adversary is
free to change the topology arbitrarily from round to round, this measure allows one to 
intuitively assume that it has to pay some price for every connection
and reconnection and we allow an algorithm a ``free'' communication
budget of comparable size. This measure has natural real-world motivation.  For example, in real-world communication networks, due to the actions of
  the lower layer link protocol (that is responsible to establish the
  connection when a physical link comes up), one can assume that
  whenever a new edge is created, some information is exchanged anyhow
  by the link layer. Thus, it is reasonable to assume that there is some cost to be paid in establishing or re-establishing a link (say, after the link is down for a while). Our new measure formalizes this intuition.
  
  Under the new complexity measure (defined formally in Section
\ref{sec:model}), we show that if $k$ is sufficiently large, we obtain
an optimal amortized message complexity of $O(n)$
(cf.~Section~\ref{sec:adversary-pays}).  In case the dynamic network
topology satisfies some natural additional properties, we also show
that the algorithm terminates in $O(nk)$ rounds. We present two
algorithms in this setting depending on how the tokens are initially
distributed: (1) a {\em single-source case}, where all the tokens
start at the same node and (2) a {\em multi-source case}, where the
initial token distribution is arbitrary.
  
When the number of tokens is not very large, say $k =n$ (i.e.,
$n$-gossip), the $O(n)$ amortized bound does not hold.  In this
setting, we are able to show a {\em subquadratic} amortized message
complexity under an {\em oblivious adversary}, which is same as the
worst-case adversary, except that it is oblivious to the random
choices made by the algorithm and the execution history (cf. Section
\ref{sec:oblivious-adv}). Our algorithm is randomized and is based on
random walks.

Our analysis of the unicast communication under the adversary-competitive model  is a main contribution
   of this paper. We believe that the adversary-competitive model can be an useful alternative to the current models in analyzing various other important problems such as leader election and agreement in dynamic networks (see e.g., \cite{podc13,disc15}).

 Our work raises several key open questions that are discussed in Section \ref{sec:conclusion}.
 \onlyShort{For lack of space, additional related work, full proofs  and additional details are deferred to the full paper \cite{AKKMP18}}
 
 \onlyLong{
\subsection{Related Work and Comparison}
\label{sec:related}
Information spreading (or dissemination) in networks is a fundamental
problem in distributed computing with a rich literature. The
problem is generally well-understood on static networks, both for
interconnection networks~\cite{leighton:book} as well as general
networks~\cite{attiya+w:distributed,lynch:distributed,peleg}.  In
particular, the $k$-gossip problem can be solved in $O(n + k)$ rounds
on any $n$-node static network~\cite{topkis85}.  There are also several papers on broadcasting, multicasting, and related
problems in static heterogeneous and wireless networks (e.g.,
see~\cite{alon+blp:radio,bar-noy+gns:multicast,bar-yehuda+gi:radio,clementi+ms:radio}).

Dynamic networks have been studied extensively over the past three
decades.  Early studies focused on dynamics that arise when edges or
nodes fail (but, generally don't consider edges/nodes recovering from failures).  A number of fault models, varying according to extent and
nature (e.g., probabilistic vs.\ worst-case) of faults and
the resulting dynamic networks have been analyzed (e.g.,
see~\cite{attiya+w:distributed,lynch:distributed}).  There are
several studies that constrain the rate at which changes occur or
assume that the network eventually stabilizes (e.g.,
see~\cite{afek+ag:dynamic,dolev:stabilize,gafni+b:link-reversal}).

To address highly unpredictable network dynamics,
models with stronger adversaries have been studied
by~\cite{avin08,kuhn-stoc10,odell05} and others;
see the recent survey of \cite{santoro} and the references therein.
Unlike prior models on dynamic networks, these models and ours do not
assume that the network eventually stops changing; the algorithms are
required to work correctly and terminate even in networks {\em that change
continually over time}.  
 
 The model of~\cite{Dutta13,HK12,kuhn-stoc10} allows for a much stronger
   adversary than the ones considered in past
   work~\cite{awerbuch+bbs:route,awerbuch+bs:anycast,awerbuch+l:flow}.
   In particular, the work of \cite{Dutta13} (also see \cite{HK12}), showed  that every token forwarding information spreading algorithm that uses local broadcast for communication under a strongly adaptive adversary (the same as considered in this paper --- cf. Section \ref{sec:local-broadcast}) requires
$\Omega(n^2/\log n)$ rounds to complete.
    The
 survey of~\cite{kuhn-survey} summarizes recent work on dynamic
 networks (see also the early works of \cite{CPMS07,CMMPS08}).

Recent work of~\cite{haeupler:gossip,haeupler+k:dynamic} presents
information spreading algorithms based on network
coding~\cite{ahlswede+cly:coding}.  As mentioned earlier, one of their
important results is that the $k$-gossip problem on the adversarial
model of~\cite{kuhn-stoc10} can be solved using network coding in
$O(n+k)$ rounds assuming the token sizes are sufficiently large
($\Omega(n\log n)$ bits). 

 It is important to note that {\em all the above results} deal with
 the {\em time complexity} of information spreading in dynamic
 networks (i.e., the number or rounds needed)
and not with the message complexity.  The focus here is on {\em amortized} message complexity for spreading
$k$ tokens. We note that there is an important difference between the two measures. In particular, algorithms with efficient time complexity need not necessarily be message-efficient and vice-versa and hence prior time complexity-based results do not directly
imply the results of this paper. 
Indeed, one can  exchange up to $\Theta(n^2)$ messages (in a graph with $\Theta(n^2)$ edges)
in just one round,  and since one needs at least $\Omega(n)$ rounds for information spreading (in the worst-case), the total message 
complexity can be as high as $\Omega(n^3)$ (for unicast). In other words, a message-efficient algorithm can take a longer time but exchanging
less total number of messages, e.g., by sending
messages only along a few edges and/or by using silence. However, as we show in Section \ref{sec:local-broadcast}, the amortized message complexity lower bound (even)  for local broadcast (where a node's local broadcast to all its neighbors is counted as just one message)  is close to the worst possible, i.e., $\Omega(n^2/\log n)$. The proof for this lower bound is inspired by the time complexity lower bound of \cite{Dutta13}, although the two proofs differ in their details. The ``bad'' lower bound for local broadcast motivates considering unicast communication which is the main focus of this paper. \onlyLong{It is important to point out  another difference
between amortized time complexity and amortized message complexity. While amortized time complexity can be as low as
$\Omega(D)$ (where $D$ is the network diameter, which can be much smaller than $n$), the amortized message complexity
is at least (trivially) $\Omega(n)$, since a token has to reach all the $n$ nodes.}  There has not been much progress  on improving
time complexity (total or amortized) in dynamic networks (both for unicast and local broadcast) in the oblivious adversary model in general networks, although
prior works \cite{gopal-disc16, sarmaMP15} has achieved improved (subquadratic in $n$) total {\em time} complexity under additional 
assumptions on the dynamic network model (these are  different from what is considered here). In particular, the work 
of \cite{sarmaMP15} considers a  dynamic network and presents an information spreading algorithm that can have
subquadratic  {\em time} complexity under some restricted conditions, e.g., when the dynamic mixing time (defined in \cite{sarmaMP15}) is small. The work of \cite{sarmaMP15} does not address amortized message complexity at all and the result
in the oblivious adversary setting of this paper  do not follow from the results of \cite{sarmaMP15}.  Both papers
use techniques based on random walks (which are very useful in the oblivious setting) which were originally developed in \cite{podc10,jacm}, but the  algorithms are quite different.

While the work of \cite{Dutta13} adopts the strongly adversarial model and local broadcast communication (we adopt the same model here for the local broadcast communication ---cf. Section \ref{sec:local-broadcast}), the work of \cite{ gopal-disc16,sarmaMP15}
adopts the oblivious adversary model (we also adopt the oblivious model here for unicast communication in Section \ref{sec:oblivious-adv}),
a {\em novel aspect of this paper is introducing and adopting a new communication cost model} that measures the communication cost of an algorithm as a function of the amount of topological changes that occur in a given execution and a new message complexity measure called {\em adversary-competitive message complexity} (Section \ref{sec:model}). A main contribution of this paper
shows that under this new complexity measure, one can obtain an efficient amortized message complexity for {\em unicast} communication that is significantly better  than the worst-case bound of $\Omega(n^2)$.
Our new measure is inspired by and  related to the notion of {\em resource competitive} algorithms \cite{saia}, although the 
details are different. The previous measure does not address an adversary in the context of dynamic networks.
}


\onlyShort{\vspace{-0.1in}}
\subsection{Dynamic Network, Communication, and Cost Model}
\label{sec:model}
\onlyShort{\vspace{-0.1in}}
In the following, we formally define the dynamic network model, the
communication models we consider, as well as the way in which we
measure the communication cost (or message complexity) of a given
token dissemination algorithm.
\\[.1cm]
\noindent{\bf Dynamic Network Model:} We model the network as a synchronous
dynamic graph $G$ with a fixed set of nodes $V$. Nodes communicate in
synchronous rounds where round $r$ starts at time $r-1$ and ends at
time $r$. For any integer $r\geq 1$, we use $G_r=(V, E_r)$ to denote
the graph of round $r$.  Throughout, we use $n:=|V|$ to denote the
number of nodes and $m_r:=|E_r|$ to denote the number of edges in
round $r$. For convenience, we define $E_0:=\emptyset$ and thus $G_0$
is the empty graph $(V,\emptyset)$. For every $r\geq 0$, we call
$E_r^+:=E_r\setminus E_{r-1}$ the set of edges \emph{inserted in round
  $r$} and we call $E_r^-:=E_{r-1}\setminus E_r$ the set of edges
\emph{removed in round $r$}.

In order to always allow progress when globally broadcasting a
message, we assume that each graph $G_r$ is connected for $r \geq 1$.

We sometimes also need the property that every edge which gets
inserted remains in the graph for at least a given number of rounds.
For an integer $\sigma\geq 1$, we call a graph \emph{$\sigma$-edge
  stable} if for every $r\geq 1$ and every edge $e\in E_r$, there
exists a round $r'\geq \max\set{1, r-\sigma+1}$ such that
$e\in E_{r'}\cap\dots\cap E_{r'+\sigma-1}$. Hence, after it appears,
every edge remains in the graph for at least $\sigma$ consecutive
rounds. Note that every dynamic graph is $1$-edge stable.

We assume that the dynamic topology is provided by a worst-case
adversary. There are adversaries of different strengths, depending on
the capability of adaptively reacting to random choices of a given
algorithm. In this paper, we distinguish between a \emph{strongly
  adaptive adversary} and an \emph{oblivous adversary}. The strongly
adaptive adversary knows the algorithm's randomness of the current
round in order to determine the dynamic topology for that
round\footnote{In comparison, a \emph{weakly adaptive adversary} only
  knows the algorithm's randomness up to the round before the current round.}. The
oblivious adversary is oblivous to any randomness used by the
algorithm and to any decision made by the algorithm, i.e., it has to commit to the sequence of network
topologies before the execution of a distributed algorithm
starts. Note that for deterministic algorithms, both adversaries are
the same.
\\[.1cm]
\noindent {\bf Communication Model:} Throughout the paper, we assume that each
node $v\in V$ has a unique $O(\log n)$-bit identifier $\ID(v)$ and
that in each round, each node can send messages containing a {\em constant
number of tokens} and $O(\log n)$ additional bits to its neighbors.

We distinguish different modes of communication, depending on whether the message exchange among
neighbors is based on local broadcast or on unicast. 
\\[.1cm]
\noindent {\bf 1. Local Broadcast Communication:}
  In each round $r$, each node $v$ can locally broadcast a message
  which is received by all neighbors of $v$. Node $v$ learns the set
  of neighbors in round $r$ when receiving the round $r$ messages from
  them. 
  \\[.1cm]
  \noindent {\bf 2. Unicast Communication:}
  
  At the beginning of each round $r$,
    each node $v$ is informed about the IDs of its neigbors in round
    $r$. Node $v$ can then send a different message to each neighbor.  
Note that if the neighborhood information is not available instantaneously, it can be obtained by exchanging messages. As a consequence, in a $2$-edge stable dynamic graph, the known neighborhood information and unknown neighborhood information are equivalent with a cost of extra messages.
\\[.1cm]
\noindent {\bf Communication Cost:} The \emph{communication cost} of a protocol
is measured by its \emph{message complexity}, i.e., by the total
number of messages sent by all the nodes throughout the whole
execution. 

\begin{definition}[Message Complexity]
  The \emph{message complexity} of a distributed algorithm is the \emph{total
  number of messages} sent in a worst-case execution. If communication
  is by local broadcast, each local broadcast by some node counts as
  one message. If communication is by unicast, messages to different
  neighbors are counted separately.  
\end{definition}
The main focus of this article
is to study the message complexity of solving the token dissemination
problem.
\begin{definition}[$k$-Token Dissemination Problem]\label{def:problem-definition}
 For some positive integer $k$, $k$ distinct tokens are initially placed at some nodes in the network. The goal is to disseminate all the $k$ tokens to all the nodes in the network.  
\end{definition}

As discussed in Section \ref{sec:introduction}, we are particularly
interested in understanding to what extent dynamic topology changes affect the
communication cost of token dissemination. We  thus consider a
cost model that measures the communication cost of an algorithm as a
function of the number of topological changes.  We formally define the \emph{number of topological
  changes} $\TC(\calE)$ of an execution $\calE$ as the total number of
edges that are inserted throughout an execution, i.e., for an $x$-round
execution $\calE$ with dynamic graph $G_r=(V,E_r)$, we have
$\TC(\calE):=\sum_{r=1}^x |E_r^+|$.\footnote{Note that since we assume
  that at time $0$ we start with an empty graph, the total number of
  edge deletions is always upper bounded by the total number of edge
  insertions. Hence we only count the edge insertions and not the edge deletions.} The following definition captures the notion that for
each dynamic change caused by the dynamic network adversary, a
distributed algorithm is allowed to send a given number of messages
``for free''.
\begin{definition}[Adversary-Competitive Message Complexity]\label{def:advcomp_msgcompl}
  Given a parameter $\alpha\geq 0$, we say that a distributed
  algorithm has \emph{$\alpha$-adversary-competitive message
    complexity $M$} if for every execution $\calE$, the total message
  complexity of the algorithm is upper bounded by
  $M + \alpha\cdot \TC(\calE)$.
\end{definition}

To capture the progress of an algorithm, one way is to count how many
new tokens have been received so far by the nodes. 

\begin{definition}[Token Learning]
	A \textit{token learning} is an event $\langle v,\tau,r
        \rangle$ that occurs in some $x$-round execution $\calE$ if and only if node $v$ receives token $\tau$ for the first time in round $r$, where $r\leq x$. 
	Then, we say $v$ learns $\tau$ in round $r$. 
\end{definition}

Based on the above definition, if each of the $k$ tokens is initially given to exactly one of the $n$ nodes, it is trivial that $k(n-1)$ token learnings must occur during an algorithm execution solving $k$-token dissemination.


\onlyShort{\vspace{-0.3cm}} 
\section{Local Broadcast Model}
 \onlyShort{\vspace{-0.1in}}
\label{sec:local-broadcast}
Before we go to the unicast setting, which is the main focus of this paper,
we present a tight quadratic (in $n$) lower bound for the amortized message
complexity of disseminating $k$ tokens in the local broadcast
setting. 

We assume that each of the $k$ tokens can
initially be given to an arbitrary subset of the nodes with the only
restriction that the nodes initially have at most $k/2$ tokens on
average. We further assume that $k$ is at most polynomially large in
$n$. Our lower bound is an extension of the time complexity lower
bound, which was developed by Dutta et al.\ in \cite{Dutta13} and
which was slightly generalized and simplified in \cite{HK12}. The main
idea of the lower bound is as follows. If initially, each token is
given to each node independently with a constant probability, the
lower bound shows that in each round of any $k$-token dissemination
algorithm execution, a strongly adaptive adversary can enforce that in
total at most $O(\log n)$  tokens are learned by the nodes. Because
by the end of an execution, the nodes together need to learn
$\Theta(nk)$ tokens (each node needs to learn the tokens it does not
know initially), this directly implies a $\Omega(nk/\log n)$ time
complexity lower bound. Here, we adapt the technique of the lower
bound of \cite{Dutta13,HK12} to show that in any round with at most
$O(n/\log n)$ broadcasting nodes\footnote{Throughout this section, we
  call a node that performs a local broadcast in some round $r$, a
  broadcasting node in round $r$.}, a strongly adaptive adversary can
prevent any new tokens from being learned. Because the nodes together
need to learn $\Theta(nk)$ tokens, together with the upper bound of
$O(\log n)$ on the number of tokens learned in a single round, this
implies that a strongly adaptive adversary can force any token
dissemination algorithm to require at least $\Omega(nk/\log n)$ rounds
with at least $\Omega(n/\log n)$ broadcasting nodes. 
	This leads to the
overall message complexity of at least $\Omega(n^2k/\log^2n)$. 

	  To prove our lower bound, we mostly use the notation
 in \cite{HK12}. Let
$\mathcal{T}$ denote the set of $k$ tokens, and for each node
$v\in V$, let $K_v(t)$ be the set of tokens that node $v$ knows by
time $t$. In each round $r$, let $i_v(r)$ denote the token
broadcast by node $v$ if $v$ is a broadcasting node in round $r$. If
$v$ is not a broadcasting node in round $r$, we define
$i_v(r):=\bot$. Note that a strongly adaptive adversary can determine
the dynamic graph topology of round $r$ after each node has chosen the token
$i_v(r)$ to locally broadcast. Generally, a collection of pairs
$\big( v,i_v\big)$, where $v\in V$ and
$i_v\in \mathcal{T}\cup\set{\bot}$ is called a token assignment.

In addition, the adversary determines a token set
$K'_v\subseteq \mathcal{T}$ for each node $v$. The sets $K_v'$ are
just used for the analysis. Informally, one can think of $K_v'$ as an
additional set of tokens that node $v$ knows at time $0$. Formally, we
do not assume that node $v$ knows the tokens in $K_v'$ initially, but
whenever $v$ learns a token from $K_v'$, we do not count this as
progress (i.e., for node $v$, we only count how many tokens from
$\mathcal{T}\setminus K_v'$ it has learned). To formally measure the
progress, we define a potential function
$\Phi(t) := \sum_{v\in V} |K_v(t)\cup K'_v|$. Recall that we assume
that initially on average, each node knows at most $k/2$ tokens, i.e.,
$\sum_{v\in V} |K_v(0)|\leq nk/2$. The adversary chooses the sets
$K_v'$ in such a way that $\Phi(0)\leq 0.8nk$. In order to solve the
token dissemination problem, the potential has to grow to $nk$. The
choice of the sets $K_v'$ therefore guarantees that the potential
needs to grow by at least $0.2nk$ throughout the execution of a
$k$-token dissemination protocol.
	
To study the growth of the potential function, the following notion is used.
An (potential) edge $\set{u,v}$ is called \emph{free} in round $r$, if and only if the
communication over $\set{u,v}$ does not contribute to $\Phi(r) - \Phi(r-1)$,
i.e., $\set{u,v}$ is free if and only if $i_u(r)\in \set{\bot}\cup K_v(r-1)\cup K_v'$
and $i_v(r)\in \set{\bot}\cup K_u(r-1)\cup K_u'$. Otherwise, the edge $\set{u,v}$ is
called \emph{non-free}. When determining the topology of round $r$, a
strongly adaptive adversary can always add all free edges to the graph
$G_r$ 
 without causing any increase of the potential function. If after
adding all free edges, the graph has $\ell$ connected components, the adversary
needs to add $\ell-1$ additional edges ``non-free'' edges in order to make $G_r$
connected. The potential function can then grow by at most $2(\ell-1)$
because over each of these additional $\ell-1$ edges, one token can be
learned in each direction. In \cite{Dutta13,HK12}, it is shown using a probabilistic method that the sets $K_v'$ can be chosen such that
$\Phi(0)\leq 0.8nk$ and such that 
in each round, the graph induced by only the free edges has at most
$O(\log n)$ connected components. Every algorithm therefore needs at
least $\Omega(nk/\log n)$ rounds for the potential to grow to $nk$.

The following lemma from \cite{HK12} shows that if each token is
randomly added to each set $K_v'$ independently with probability
$1/4$, adding all free edges reduces the number of components to
$O(\log n)$ for all rounds with constant probability.

\begin{lemma}\label{lem:time-lb}(Lemma 1 of \cite{HK12}) If each set
  $K'_u$ contains each token $i\in\mathcal{T}$ independently with
  probability $1/4$, with probability at least $3/4$, for all rounds $r$ and all possible token
  assignments $(v, i_v(r))$ in round $r$, the graph $F(r)$
  induced by all free edges in round $r$ has at most $O(\log n)$
  connected components. 
\end{lemma}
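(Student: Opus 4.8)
The plan is to prove the statement in its contrapositive, union-bound form: I will show that with probability at least $3/4$ over the choice of the sets $K'_v$, there is no round and no token assignment for which $F(r)$ has more than $t:=c\log n$ components. Here $c=c(a)$ is a constant depending only on the exponent $a$ with $k\le n^a$ (allowed since $k$ is polynomial in $n$). The whole difficulty is that the bad event must be controlled simultaneously over all rounds and all of the $(k+1)^n$ possible assignments, so I will reduce it to a single event that depends on $K'$ and on a small ``certificate'' only. Two elementary observations set up this reduction. First, a node broadcasting a token $\tau$ necessarily knows $\tau$, so $\tau\in K_v(r-1)$; hence any two nodes broadcasting the same token are joined by a free edge, and any two silent nodes are joined by a free edge. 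Thus all broadcasters of a common token lie in one component, and all silent nodes lie in one component. Consequently, if $F(r)$ has more than $t$ components, at least $t$ of them contain a broadcasting node, and picking one broadcasting representative from $t$ distinct components yields nodes $v_1,\dots,v_t$ whose broadcast tokens $\beta_1,\dots,\beta_t$ are pairwise distinct (equal tokens would force the same component) and which are pairwise non-free.

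Second, I would pass from non-freeness in $F(r)$ to a statement about $K'$ alone. If $\{v_i,v_j\}$ is non-free even with real knowledge included, then in particular it is non-free using $K'$ only, i.e.\ it is \emph{not} the case that both $\beta_j\in K'_{v_i}$ and $\beta_i\in K'_{v_j}$ hold. Therefore the event ``$F(r)$ has more than $t$ components for some round and some assignment'' is contained in the purely $K'$-dependent event that there exist distinct nodes $v_1,\dots,v_t$ and distinct tokens $\beta_1,\dots,\beta_t$ such that for every pair $i\neq j$ at least one of $\beta_j\in K'_{v_i}$, $\beta_i\in K'_{v_j}$ fails. Bounding the probability of this latter event is what remains.

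The heart of the argument is a union bound over the at most $n^t k^t$ ordered choices of representatives and tokens, together with an independence observation that makes the per-certificate probability decay quadratically in $t$. For a fixed certificate, the $\binom{t}{2}$ pairwise ``bad'' events each depend only on the two membership indicators $\mathbf 1[\beta_j\in K'_{v_i}]$ and $\mathbf 1[\beta_i\in K'_{v_j}]$. Because the $v_i$ are distinct and the $\beta_i$ are distinct, each indicator entry $(v_a,\beta_b)$ with $a\neq b$ is used by exactly one of these pairs, so the $\binom{t}{2}$ events are mutually independent, each of probability $1-(1/4)^2=15/16$. Hence a fixed certificate is ``all-bad'' with probability exactly $(15/16)^{\binom{t}{2}}$, and the union bound gives $n^t k^t (15/16)^{\binom{t}{2}}\le \exp\big((1+a)\,t\ln n-\tfrac12\ln(16/15)\,t(t-1)\big)$. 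Taking $t=c\log n$, the quadratic term in the exponent dominates as soon as $c$ exceeds a constant multiple of $1+a$, pushing this probability below $1/4$. This is exactly the point where the $2^{-\Theta(t^2)}$ decay of the per-certificate probability overtakes the $2^{\Theta(t\log n)}$ number of certificates, and it is what pins the threshold at $t=\Theta(\log n)$.

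The main obstacle—and the only genuinely delicate step—is establishing this disjoint-entries/independence structure cleanly, since it is what yields the crucial $(15/16)^{\binom{t}{2}}$ rather than a mere $(15/16)^{\Theta(t)}$ bound; a linear-in-$t$ decay would be swamped by the number of certificates and would fail to close the union bound. I expect the remaining ingredients to be immediate: that extra real knowledge only adds free edges, that same-token and silent nodes coalesce into single components, and that choosing representatives from $t$ broadcasting components produces distinct tokens. The dependence of the hidden constant in $O(\log n)$ on the polynomial degree of $k$ is harmless under the standing assumption that $k$ is at most polynomial in $n$.
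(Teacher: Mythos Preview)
The paper does not actually prove this lemma; it is quoted verbatim as ``Lemma~1 of \cite{HK12}'' and used as a black box in the message-complexity lower bound, so there is no in-paper proof to compare against.

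That said, your argument is correct and is essentially the standard proof of this statement. The two key reductions you identify are exactly the right ones: (i) since a node can only broadcast a token it already holds, same-token broadcasters (and, separately, all silent nodes) are pairwise free, so more than $t$ components force $t$ broadcasting representatives with \emph{pairwise distinct} tokens; and (ii) non-freeness with respect to $K_v(r-1)\cup K'_v$ implies non-freeness with respect to $K'_v$ alone, which strips away all dependence on the round and on the algorithm's execution and leaves a purely combinatorial event over the random sets $K'_v$. Your independence claim for the $\binom{t}{2}$ pair events is also sound: because the $v_i$ are distinct and the $\beta_i$ are distinct, the indicator $\mathbf 1[\beta_j\in K'_{v_i}]$ is indexed by the ordered pair $(i,j)$ with $i\neq j$, and each unordered pair $\{i,j\}$ uses exactly the two entries $(i,j)$ and $(j,i)$, disjoint from those of any other pair. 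This yields the crucial $(15/16)^{\binom{t}{2}}$ decay, and the union bound over at most $(nk)^t$ certificates then closes for $t=c\log n$ with $c$ a constant depending only on the polynomial degree $a$ in $k\le n^a$. One small point you might make explicit: if $k<t$ then there cannot be $t$ distinct broadcast tokens at all, so $F(r)$ trivially has at most $k+1\le t+1=O(\log n)$ components and the interesting case is $k\ge t$.
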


We next show that if the number of broadcasting nodes is
small, adding all free edges leaves only one connected component. For
a constant $c>0$, we define a token assignment $(v,i_v)$ to be
$c$-sparse if at most $n/(c\log n)$ of the nodes are broadcasting nodes
(i.e., for at most $n/(c\log n)$ nodes, we have $i_v\neq
\bot$).

\begin{lemma}\label{lem:msg-lb}
  There is a constant $c>0$ such that if each set $K'_u$ contains each
  token $i\in\mathcal{T}$ independently with probability $1/4$, with
  probability at least $1-2^{-n}$, for all rounds $r$ and all possible
  $c$-sparse token assignments $(v, i_v(r))$, the graph $F(r)$ 
  induced by all free edges in round $r$ consists of a single
  connected component.
\end{lemma}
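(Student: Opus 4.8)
The plan is to reduce the universally quantified statement to a single large-deviation estimate about the random sets $K_v'$ alone. The first step is to discard the execution-dependent sets $K_v(r-1)$ from the analysis. Call an edge $\set{u,v}$ \emph{$K'$-free} if $i_u(r)\in\set{\bot}\cup K_v'$ and $i_v(r)\in\set{\bot}\cup K_u'$. Every $K'$-free edge is in particular free, so the graph of $K'$-free edges is a spanning subgraph of $F(r)$; hence it suffices to prove that already the $K'$-free graph is connected. This removes all dependence on $K_v(r-1)$ (and thus on the algorithm and the adversary's adaptive choices) and leaves a statement that involves only the fixed random sets $K_v'$ together with the token assignment $(v,i_v(r))$. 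Since there are only finitely many token assignments and each determines the $K'$-free graph, quantifying over all $c$-sparse assignments automatically covers all rounds $r$.

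The second step exploits the sparsity. Any two non-broadcasting nodes $u,v$ (with $i_u(r)=i_v(r)=\bot$) are joined by a $K'$-free edge, so the non-broadcasting nodes form a clique in the $K'$-free graph. Consequently, if this graph is disconnected there must be a \emph{broadcasting} node $u$, say broadcasting $\tau:=i_u(r)$, that has no free edge to any non-broadcasting node. For non-broadcasting $v$ the edge $\set{u,v}$ is $K'$-free exactly when $\tau\in K_v'$, so such an isolated $u$ forces $\tau\notin K_v'$ for every non-broadcasting $v$. Because at most $n/(c\log n)$ nodes broadcast, there are at least $n-n/(c\log n)$ non-broadcasting nodes. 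Therefore the bad event (disconnection for \emph{some} $c$-sparse assignment in \emph{some} round) implies the much simpler event that there exists a token $\tau\in\mathcal{T}$ missing from at least $n-n/(c\log n)$ of the sets $K_v'$ — an event that refers only to the random sets $K_v'$ and a single token, with no further reference to $u$, to the rest of the assignment, or to the round.

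The third step bounds this event by a Chernoff inequality and a union bound over tokens. For a fixed $\tau$, the number of sets $K_v'$ containing $\tau$ is a sum of $n$ independent $\set{0,1}$ variables, each equal to $1$ with probability $1/4$, hence with mean $n/4$. The bad event asks this count to drop below $n/(c\log n)=o(n)$, far beneath the mean; a standard lower-tail Chernoff bound gives probability $2^{-\Omega(n)}$, and any fixed $c>0$ suffices since $n/(c\log n)<n/8$ for all large $n$. Taking a union bound over the $k\le\poly(n)$ tokens multiplies this by a $2^{o(n)}$ factor, so the overall failure probability remains $2^{-\Omega(n)}$, giving the claimed high-probability connectivity of $F(r)$.

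The step I expect to be the main obstacle is the second one: collapsing the enormous quantifier ``for all rounds and all $c$-sparse token assignments'' into one clean event about the $K_v'$. The reduction to $K'$-freeness together with the clique structure of the non-broadcasting nodes is exactly what makes this possible. Without it one would be forced to union bound directly over the (super-exponentially many) assignments and cuts; although that cardinality can in principle be controlled using $k=\poly(n)$ and the sparsity bound $n/(c\log n)$, the resulting estimate is far less transparent. Once the bad event is rewritten as ``some token is under-represented among the $K_v'$,'' the remaining probabilistic estimate is routine.
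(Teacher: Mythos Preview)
Your proof is correct and takes a cleaner route than the paper's. The paper fixes a $c$-sparse assignment, bounds the probability that some broadcasting node $v\in B$ has no free edge into $\bar B$, and then takes a union bound over all $\binom{n}{\beta}k^\beta$ possible choices of the broadcasting set $B$ and the tokens they send; controlling this large union bound is precisely what forces the constant $c$ to be sufficiently large. You sidestep the union bound over assignments entirely by observing that disconnection of the $K'$-free graph forces the existence of a single token $\tau$ with $|\{v:\tau\in K_v'\}|\le n/(c\log n)$, an event that no longer references the assignment, the broadcasting node, or the round. A Chernoff bound on one $\mathrm{Bin}(n,1/4)$ variable followed by a union bound over the $k=\poly(n)$ tokens then suffices. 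This buys you a shorter argument and the stronger conclusion that any fixed $c>0$ works (for large $n$), not only a sufficiently large one. One minor remark: both your bound and the paper's (once one replaces the paper's $(1/4)^{n-\beta}$ by the intended $(3/4)^{n-\beta}$) yield failure probability $2^{-\Omega(n)}$ rather than the literally stated $2^{-n}$, since already $\Pr[\tau\notin K_v'\ \text{for all}\ v]=(3/4)^n>2^{-n}$; this is immaterial for the downstream probabilistic-method application.
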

\begin{proof}
  We first bound the probability for a fixed $c$-sparse token
  assignment $(v, i_v)$. The claim of the lemma will then follow by a
  union bound over all the possible $c$-sparse token assignments.  Let
  $B$ denote the set of broadcasting nodes, i.e., the nodes for which
  $i_v\neq \bot$. Further, let $\beta := |B|\leq n/(c\log n)$ and let
  $\bar{B}:=V\setminus B$.  Clearly, all the edges among the nodes in
  $\bar{B}$ are free. It is therefore sufficient to show that for each
  node $v$ in $B$, there is a free edge connecting $v$ to a node in
  $\bar{B}$.  Then, all the free edges induce a connected graph over
  all the nodes (also see Figure \ref{fig:lb}).

  Consider an edge $\{u,v\}$, where $u\in \bar{B}$, $v\in B$, and $v$
  is locally broadcasting token $\tau$.  Edge $\set{u,v}$ is a free
  round (for every round $r$) if $\tau\in K'_u(0)$. This happens with
  probability $1/4$ (independently for every node $u\in \bar{B}$).  The probability that $v$
  has no free edge to some node in $\bar{B}$ is thus at most
  $(1/4)^{n-\beta}$.  Thus, the probability that there exists at least
  one node in $B$ that has no free edge to $\bar{B}$ is at most
  $\beta/4^{n-\beta}$.  Considering a union bound over all
  ${{n}\choose{\beta}} <n^\beta$ ways to choose a set of $\beta$ nodes
  and all at most $k^\beta$ ways to choose the tokens to be sent out
  by these nodes, the probability that there exists a token assignment
  for which there is a node in $B$ that has no free edge to $\bar{B}$
  can therefore be upper bounded by
  \begin{align*}
    \Pr(\exists v\in B \textit{ s.t. } & \forall u\in \bar{B}: \{v,u\} \textit{ is non-free})\\
                                       & \leq n^\beta \cdot k^\beta \cdot \frac{\beta}{4^{n-\beta}}\\
                                       & = 2^{\beta (\log(nk)+2) + \log{\beta-2n}}\\
                                       & \leq 4^{\frac{c}{2}\beta \log n-n} \quad \quad \text{[for some constant $c$]}\\
                                       & < 2^{-n} \quad \quad  \quad \quad \quad \text{[for } \beta<\frac{n}{c \log n}\text{]}
  \end{align*}	
  \noindent
  Hence, with probability at least $1-2^{-n}$, for each possible token
  assignment (and for each round), each node $v\in B$ has a free edge
  connected to some node in $\bar{B}$.
\end{proof}
\begin{figure}
	\centering
	\footnotesize
\begin{tikzpicture}[scale=0.85]
	\draw[dashed] (-2.5,0) ellipse (1.5cm and 2cm);
	\draw (2.5,0) ellipse (1.5cm and 2cm);
	
	\draw[fill] (-2.5,.3) circle (.07cm);
	\draw[fill] (-2.5,.8) circle (.07cm);
	\draw[fill] (-2.5,1.3) circle (.07cm);
	\draw[fill] (-2.5,-1.2) circle (.07cm);
	\draw[fill] (-2.5,-.2) circle (.07cm);

	\draw[fill] (-2.5,-0.9) circle (.02cm);
	\draw[fill] (-2.5,-0.8) circle (.02cm);
	\draw[fill] (-2.5,-0.7) circle (.02cm);

	\begin{scope}[shift={(2.5,0)}]
		\foreach \x in {30, 90,165, 210,270, 330}
		{
			\draw[fill]  (\x:1cm) circle (.07);
		}
	 \end{scope}
	 	
	\begin{scope}[shift={(2.5,0)}]
		\foreach \x in {30, 90,165, 210,270, 330}
		{
			\foreach \y in {30, 90,165, 210,270, 330}
				\draw[gray]  (\x:1cm) -- (\y:1cm);
		}
	\end{scope}
	 
	 \draw[gray] (-2.5,.3) -- (2.5,1);
	 \draw[gray] (-2.5,1.3) -- (2.5,1);
	 \draw[gray] (-2.5,.8) -- (1.6,.3);
	 \draw[gray] (-2.5,-.2) -- (1.7,-.5);
	 \draw[gray] (-2.5,-1.2) -- (2.5,-1);
	
	\node at (0,1.7) {\textit{free edges}};

	\node at (-2.5,2.3) {$B$};
	\node at (2.5,2.3) {$\bar{B}$};
\end{tikzpicture}

\caption{\footnotesize It shows the connected graph induced by (a subset of) the free edges in a round with at most $O(n/\log n)$ broadcasting nodes. 
	The free edges among the nodes in $\bar{B}$ induce a clique, and each of the broadcasting nodes in $B$ is connected to some node in $\bar{B}$ by a free edge.}
\label{fig:lb}
\end{figure}
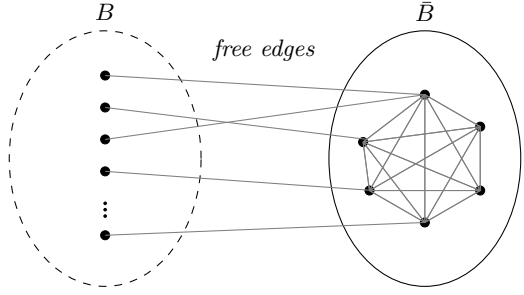
\onlyShort{\vspace{-0.3cm}} 
\begin{theorem}
  In any always connected dynamic network, if initially each node on
  average knows at most half of the $k$ tokens, the amortized message
  complexity of solving the $k$-token dissemination problem against a
  strongly adaptive adversary is at least $\Omega(n^2/\log^2n)$ in the
  local broadcast communication model.
\end{theorem}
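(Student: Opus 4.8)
\section*{Proof plan}

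The plan is to combine Lemmas \ref{lem:time-lb} and \ref{lem:msg-lb} via the probabilistic method to fix one good choice of the auxiliary sets $K_v'$, and then to argue that any algorithm can only make progress in rounds with many broadcasting nodes, which are exactly the message-expensive rounds. First I would show that the sets $K_v'$ can be chosen so that three properties hold simultaneously: (i) the initial potential satisfies $\Phi(0)\le 0.8nk$; (ii) for every round $r$ and every token assignment, the free-edge graph $F(r)$ has $O(\log n)$ connected components (Lemma \ref{lem:time-lb}); and (iii) for every round $r$ and every $c$-sparse token assignment, $F(r)$ is a single connected component (Lemma \ref{lem:msg-lb}). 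For (i), since $\E[|K_v'|]=k/4$ and on average $\sum_{v}|K_v(0)|\le nk/2$, we get $\E[\Phi(0)]\le\sum_v(|K_v(0)|+\E[|K_v'|])\le 0.75nk$; because $\Phi(0)$ is a sum of independent token-indicator variables, a Chernoff bound gives $\Pr[\Phi(0)>0.8nk]=2^{-\Omega(nk)}$. A union bound with the failure probabilities $1/4$ and $2^{-n}$ of Lemmas \ref{lem:time-lb} and \ref{lem:msg-lb} then shows all three properties hold with probability at least $1-\tfrac14-2^{-n}-2^{-\Omega(nk)}>0$, so a fixed good choice of the $K_v'$ exists.

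Next I would fix such sets $K_v'$ and specify the strongly adaptive adversary's strategy. In each round $r$, after the algorithm commits to its token assignment $(v,i_v(r))$, the adversary inserts all free edges into $G_r$. If the round is $c$-sparse (at most $n/(c\log n)$ broadcasting nodes), property (iii) guarantees that $F(r)$ is already connected, so no non-free edges are needed and $\Phi$ does not grow at all in round $r$. If the round is not $c$-sparse, property (ii) guarantees that $F(r)$ has at most $O(\log n)$ components, so the adversary restores connectivity with at most $O(\log n)$ non-free edges; since each non-free edge raises $\Phi$ by at most $2$ (one token learned in each direction), the potential grows by at most $O(\log n)$ in that round. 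The key structural conclusion is therefore that $\Phi$ increases \emph{only} in rounds that are not $c$-sparse, i.e., rounds in which more than $n/(c\log n)$ nodes broadcast.

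Finally I would count. Since $\Phi(0)\le 0.8nk$ while solving $k$-token dissemination forces $\Phi$ up to $nk$, the total growth is at least $0.2nk$. As each non-$c$-sparse round contributes at most $O(\log n)$ to this growth, there must be at least $\Omega(nk/\log n)$ such rounds. Each of them has more than $n/(c\log n)$ broadcasting nodes and hence costs $\Omega(n/\log n)$ local-broadcast messages, so the total message complexity is at least $\Omega(nk/\log n)\cdot\Omega(n/\log n)=\Omega(n^2k/\log^2 n)$. Dividing by the $k$ tokens yields the claimed amortized lower bound of $\Omega(n^2/\log^2 n)$.

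The step I expect to be the main obstacle is the simultaneous-conditioning argument: Lemmas \ref{lem:time-lb} and \ref{lem:msg-lb} are high-probability statements over the \emph{same} random choice of the $K_v'$, and I must ensure that the union bound (together with the concentration needed for $\Phi(0)\le 0.8nk$) still leaves positive probability, so that a single fixed assignment of the $K_v'$ realizes all three properties against every worst-case algorithm. The remaining work is bookkeeping—attributing all potential growth to dense rounds and converting the round count into a message count—but it relies crucially on the assumption that $k$ is at most polynomial in $n$, which is precisely what makes the constant $c$ in the definition of $c$-sparse (and hence in Lemma \ref{lem:msg-lb}) exist.
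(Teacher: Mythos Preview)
Your proposal is correct and follows essentially the same approach as the paper: both arguments use the probabilistic method to fix sets $K_v'$ satisfying the three properties simultaneously (via a Chernoff bound for $\Phi(0)$ together with a union bound over Lemmas~\ref{lem:time-lb} and~\ref{lem:msg-lb}), then observe that the potential can only increase in non-$c$-sparse rounds, and by at most $O(\log n)$ in each such round, which forces $\Omega(nk/\log n)$ rounds with $\Omega(n/\log n)$ broadcasts each. Your concern about the ``simultaneous-conditioning'' step is not an obstacle---it is exactly the union bound you describe, and the paper handles it identically.
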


\begin{proof}
  Using the probabilistic method, we show that the adversary can
  choose the sets $K_u'$ such that at time $0$,
  $\Phi(0)\leq 0.8nk$ and such that for every possible strategy of the
  algorithm, the adversary can choose the graph of each round such
  that (1) the graph is connected, (2) the number of connected
  components after adding all free edges is at most $O(\log n)$, and
  (3) if there are at most $n/(c\log n)$ broadcasting nodes, for a
  sufficiently large constant $c>0$, the free edges induce a connected
  graph. The theorem then follows because (a) the potential needs to
  grow by $0.2nk$ in order to solve the token dissemination problem
  and (b) the potential increase per round is always at most
  $O(\log n)$ and it is $0$ if the number of broadcasting nodes is less
  than $n/(c\log n)$.  \onlyShort{We defer the detailed proof to the full paper \cite{AKKMP18}.}

\onlyLong{
  To apply the probabilistic method, we let each set $K'_u$ contain
  each token $i\in\mathcal{T}$ independently with probability
  $1/4$. First note that by a standard Chernoff argument, the
  probability that $\sum|K'_u|> 0.3nk$ is exponentially small in $nk$
  and thus the probability that $\Phi(0)>0.8nk$ is also exponentially
  small in $nk$. Further, by
  Lemma~\ref{lem:time-lb} and Lemma~\ref{lem:msg-lb}, for every round $r$, and
  every token assignment $(v,i_v(r))$, the graph $F(r)$ induced by all
  the free edges has the following two properties with probability at
  least $\frac{3}{4}-2^{-n}$: (1) $F(r)$ contains at most
  $O(\log n)$ connected components, (2) $F(r)$ is connected over all
  the nodes if there are at most $n/(c\log n)$ broadcasting nodes.
  This
  shows that (for sufficiently large $n$), there is a way to choose
  the sets $K'_u$ sets such that $\Phi(0)\leq 0.8nk$, the potential
  increase per round is at most $O(\log n)$, and if there are at most
  $n/(c \log n)$ broadcasting nodes, the potential increase is $0$
  and the claim of the theorem follows.
  }
\end{proof}


\onlyShort{\vspace{-0.4in}} 
\section{Unicast Model}
\label{sec:adversary-pays}
\onlyLong{We want to solve  the $k$-token dissemination problem where the $k$ tokens are initially distributed (arbitrarily) over the network and the goal is to disseminate all the tokens to all the nodes with as few messages as possible.
To solve this problem, it turns out that it is first easier to consider a special instance --- called the {\em Single Source Case} --- where all the $k$ tokens are initially located in a single source node. We use the Single Source Algorithm  (Section \ref{sec:single-source}) as a subroutine to solve the more general 
{\em Multi-Source} case (Section \ref{multi-source}).}

 \onlyShort{\vspace{-0.1in}}
\subsection{Single Source Node}
\label{sec:single-source}
Consider the $k$-token dissemination problem such that all the $k$ tokens are initially given to a single source node.  Let us now present a deterministic algorithm to solve this problem with message complexity of $O(n^2+nk)+\TC(\calE)$ against a strongly adaptive adversary.  Hence, the algorithm has \emph{1-adversary-competitive  (total) message complexity} of $O(n^2+nk)$ (cf.~Def.~\ref{def:advcomp_msgcompl}).  In other words, if the algorithm is provided with a budget that equals to the number of topological changes, then for sufficiently large $k$, the amortized message complexity to disseminate the tokens is linear in $n$.  Note that even in a static graph, the cost to disseminate a single token is $\Omega(n)$.  Hence, if the number of tokens is at least linear in $n$, the amortized message complexity is asymptotically best possible.
	Before we present the algorithm and its analysis, consider the following definitions.
\begin{definition}[Complete and Incomplete Node]\label{def:complete-node}
	We say that node $v$ is \textit{complete} at time $t$ if it has all the $k$ tokens at this time. Otherwise, $v$ is \textit{incomplete}.
 \end{definition}
\begin{definition}[Bridge Node]
\label{def:bridge-node}
	In each round, any incomplete node that has a complete neighbor is called a \textit{bridge node} for that round. 
\end{definition} 

\onlyShort{\vspace{-0.2in}}
\subsubsection{Single-Source Unicast Algorithm}
\onlyShort{\vspace{-0.1in}}
\label{sec:adv-pays-single-source-algo}
	The source node considers an arbitrary order of the tokens and assigns integer $i$ to its $i^{th}$ token as its \textit{token ID}.
	In the algorithm, only complete nodes send tokens during an execution. 
	To this end, each complete node announces its completeness to its neighbors.
	In each round, each incomplete node sends token requests to (some of) its complete neighbors.
	Then, in the very next round, each complete node sends back the requested tokens to the requesting nodes if it is still connected to them. 
	Although the general idea is simple, a careful strategy is needed to avoid redundant communication.
	
	Each complete node $v$ informs each node about the time of $v$'s completeness at most once by remembering which nodes $v$ informed before.
	Each node also remembers all the complete nodes it is informed by about their completeness.  
	Each incomplete node chooses among its complete neighbors for sending token requests to, based on a priority defined by the following categorization of its adjacent edges.
	
	Consider an edge $e=\{v,w\} \in E_r$ such that $v$ is incomplete and $w$ is complete.
	Then $e$ is called \textit{new} in round $r$ if the edge is inserted at the beginning of round $r$ or $r-1$.
	Edge $e$ is called \textit{contributive} if it is not new, but a new token is sent over it between the last insertion of the edge and the end of round $r$, i.e., it contributes to the dissemination. 
	Otherwise, if $e$ is neither new nor contributive, it is called \textit{idle} in round $r$.
	
	Based on the above definitions, if $v$ has $\tau$ missing tokens, it creates $\tau$ token requests, one for each missing token. 
	Then, $v$ assigns exactly one distinct token request to each of the new edges (if any).
	Afterwards, if there are still token requests left to be assigned, $v$ assigns exactly one request to each of the idle edges (if any). 
	Finally, $v$ does the same for the contributive edges.
	Note that as each edge has at most one assigned token request, there might be token requests that are not assigned in the current round.
	At the end, $v$ sends the assigned token requests in round $r$ over the corresponding edges. 
		
	Note that for categorizing an adjacent edge $e=\{v,w\}$, an incomplete node $v$ might need to know whether it learns a token over $e$ in round $r$ or not.
	However, if $v$ sends a token request over $e$ in round $r-1$, and $e\in E_r$, then $v$ knows that it learns a token over $e$ in round $r$. 
	Moreover, to avoid sending redundant token requests, node $v$  needs to know whether it learns some requested token in round $r$ or not.
	However, $v$ knows the token requests it sent over its adjacent edges in round $r-1$. 
	Then, by knowing the adjacent edges in round $r$, and the fact that complete nodes immediately respond to requests, $v$ knows what tokens it learns in round $r$. \onlyLong{The pseudocode is given in Algorithm~\ref{alg:single-source-unicast}.}

		\onlyLong{
\begin{algorithm}[t]
\caption{\sc Single-Source-Unicast}\label{alg:single-source-unicast}
Initially, the source node labels the tokens from $1$ to $k$ as token IDs, and the following code is run by any node $v$ in any round $r$.
\begin{algorithmic}[1]
\If {$v$ is complete}
	\ForAll {$v$'s neighbor $u$}
		\If {$u$ does not know $v$'s completeness}
			\State send \textit{Completeness} to $u$
		\ElsIf {$u$ sent \textit{Request($i$)} in round $r-1$}
			\State send the $i^{th}$ token to $u$
		\EndIf
	\EndFor
\ElsIf{$\{b_1, b_2, \dots, b_\gamma\}$ is the ID set of missing tokens for $v$}
	\State $j \gets 0$
	\ForAll {$v$'s new edge $e$}
		\If {$j<\gamma$} 
			\State $j \gets j+1$
			\State send \textit{Request($b_j$)} over $e$
		\EndIf
	\EndFor
	\ForAll {$v$'s idle edge $e$}
		\If {$j<\gamma$} 
			\State $j \gets j+1$
			\State send \textit{Request($b_j$)} over $e$
		\EndIf
	\EndFor
	\ForAll {$v$'s contributive edge $e$}
		\If {$j<\gamma$} 
			\State $j \gets j+1$
			\State send \textit{Request($b_j$)} over $e$
		\EndIf
	\EndFor
\EndIf 
\end{algorithmic}
\end{algorithm}
}

\onlyShort{\vspace{-0.1in}}
\subsubsection{Analysis}\label{sec:adv-pays-single-source-analysis} 
\onlyShort{\vspace{-0.1in}} 
	First, let us argue the message complexity of the algorithm. 
	Then, we show that with a natural stability assumption the time complexity is also small.
\vspace{.1in}	
\begin{theorem}\label{thm:unicast-msg}
	Given $k$ tokens to disseminate in a dynamic network against a strongly adaptive adversary, the Single-Source Unicast Algorithm has 1-adversary-competitive message complexity of $O(n^2+nk)$.  
\end{theorem}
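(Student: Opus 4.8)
The plan is to account for every message the algorithm ever sends and partition them into three classes: \emph{completeness announcements} (a complete node telling a neighbor it is complete), \emph{token requests} (an incomplete node asking a complete neighbor for a missing token), and \emph{token responses} (a complete node returning a requested token). Since the algorithm is deterministic, a strongly adaptive adversary is no stronger than an oblivious one, so it suffices to bound each class for an arbitrary dynamic graph sequence and show the total is at most $O(n^2+nk)+\TC(\calE)$, which is exactly what $1$-adversary-competitive message complexity $O(n^2+nk)$ means.

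I would dispose of the completeness announcements first: the algorithm has each complete node remember whom it has already informed, so it tells any fixed other node at most once. Each of the $n$ nodes therefore sends at most $n-1$ such messages, for $O(n^2)$ in total; this is the only source of the $n^2$ term. The crux is then to tie requests and responses to \emph{token learnings}. Every response is triggered by a request sent in the previous round over an edge that still exists, and it delivers a token the receiver lacked. Using the algorithm's bookkeeping—each incomplete node requests a \emph{distinct} missing token over each chosen edge in a round, and it can deduce from the current neighborhood which of its previously issued requests will be answered this round, so it never re-requests a token it is about to learn—I would argue that responses are in bijection with distinct token learnings. Because each of the $n-1$ non-source nodes must learn all $k$ tokens, there are exactly $(n-1)k$ learnings, so the number of responses, equivalently the number of \emph{successful} requests (those whose edge survives to the next round), is $O(nk)$.

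It remains to bound the \emph{failed} requests, namely those sent over an edge $e=\{v,w\}$ in a round $r$ with $e\notin E_{r+1}$, which produce no response. Here is the key charging step: within a single maximal interval during which $e$ is present (one ``epoch'' of $e$, from an insertion to the subsequent removal), a request is wasted only if it is issued in the very last round of the interval, since in every earlier round $e$ still exists in the next round and the request succeeds. As a node sends at most one request over a given edge per round, each epoch contributes at most one failed request. The number of epochs equals the total number of edge insertions, i.e.\ $\TC(\calE)$ (the graph starts empty, so the number of removals never exceeds the number of insertions), whence the failed requests total at most $\TC(\calE)$. Adding the three bounds gives $O(n^2)+O(nk)+\TC(\calE)=O(n^2+nk)+\TC(\calE)$.

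The step I expect to be the main obstacle is the bijection between responses and token learnings: one must argue carefully, from the stated facts that a node knows the requests it issued last round and can determine from its current incident edges which will be served, that no token is ever requested—and hence delivered—twice, so that every successful request is accounted for by a genuinely new learning and vice versa. Everything else is clean counting; in particular, I expect the $O(n^2)$ announcement bound and the per-epoch charging of wasted requests to $\TC(\calE)$ to go through without invoking the new/idle/contributive edge classification at all (that classification should only be needed for the subsequent time-complexity guarantee, not for message complexity).
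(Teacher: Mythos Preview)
Your proposal is correct and mirrors the paper's own proof almost exactly: the paper also partitions messages into completeness announcements ($O(n^2)$), token messages ($O(nk)$, since each token is received by each node at most once), and token requests (each either yields a token learning or is lost to an edge removal, hence $O(nk)+\TC(\calE)$). Your observation that the new/idle/contributive classification is irrelevant for the message bound and only matters for the later time-complexity argument is also exactly right.
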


	\begin{proof}
	\vspace{-0.1in}
		There are three different types of messages sent by nodes during the algorithm execution; (1) token, (2) completeness announcement, and (3) token request. 
		Each node sends the request of each distinct token to at most one neighbor in a round. 
		If the connection to that complete neighbor remains for the very next round, then the requested token will be successfully received by the node and the node stops sending this token request. 
		Therefore, each distinct token is received by each node once, and hence there are at most $O(nk)$ sent messages of type 1 throughout the execution. 
		
		Each of the $n$ nodes informs at most $n-1$ other nodes about its completeness throughout the execution. 
		Since each node avoids informing the same node more than once, at most $O(n^2)$ messages of type 2 are sent throughout the execution. 
		    
		It remains to show that the number of sent messages of type 3 is at most $O(nk)+\TC(\calE)$ during execution $\calE$. 
		In each round where a token request is sent by some node, a new token is received in the next round unless the edge is removed. 
		Therefore, we can say that the number of token requests sent at any time is at most $O(nk)$ plus the number of edge deletions. 
		$O(nk)$ comes from the fact that there exist $k$ tokens and each token is received by at most $O(n)$ nodes, each token once. 
		Furthermore, since we assume that the initial graph is an empty graph, the number of edge deletion is upper bounded by $\TC(\calE)$. 
	\end{proof}
	
	
	In the following, we argue that with a natural stability assumption, the algorithm disseminates all the tokens and terminates fast. 
	The following two lemmas show that prioritization of sending token requests over different edge types ensures fast dissemination.
	
	\begin{definition}[Futile Round]
		Round $r$ is a \textit{futile round}, if no token request is sent over a contributive edge in round $r$, and no token learning occurs in rounds $r+1$ and $r+2$.
	\end{definition}
	\begin{lemma}\label{lem:idle-edge}
		Let $r$ be an arbitrary futile round in any execution of the Single-Source Unicast Algorithm on a $3$-edge stable dynamic network. 
		Then, if there exist $\ell$ bridge nodes in round $r$, at least $\ell$ idle edges are removed at the end of round $r$.  
	\end{lemma}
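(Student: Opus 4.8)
The plan is to exhibit, for each of the $\ell$ bridge nodes, one idle edge incident to it that is deleted at the end of round $r$, and then argue these $\ell$ edges are pairwise distinct. Distinctness will come essentially for free: by definition an idle edge $\{v,w\}$ joins an incomplete endpoint $v$ to a complete endpoint $w$, so each idle edge has a well-defined incomplete endpoint. Since distinct bridge nodes are distinct incomplete nodes, if I select one idle edge per bridge node, the selected edges have pairwise-distinct incomplete endpoints and are therefore pairwise distinct. Thus it suffices to show that every bridge node $v$ is incident to at least one idle edge that disappears at the end of round $r$.

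The engine of the argument is a single observation about the request/response mechanism: if an incomplete node $v$ sends a request for a still-missing token over an edge $e=\{v,w\}$ to a complete neighbor $w$ in some round $\rho$ and $e\in E_{\rho+1}$, then in round $\rho+1$ the (still) complete $w$ answers, so $v$ receives a token it lacked, i.e.\ a token learning occurs in round $\rho+1$. Taking $\rho=r$ and using that round $r$ is futile (no token learning in round $r+1$), I would conclude that every edge over which a bridge node sends a request in round $r$ must be removed at the end of round $r$. Separately, $3$-edge stability guarantees that any edge which is new in round $r$ (inserted at the beginning of round $r$ or $r-1$) still lies in $E_{r+1}$, hence is not removed at the end of round $r$; combined with the previous sentence this forces that no request is sent over a new edge in round $r$.

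It then remains to show each bridge node $v$ actually sends a request over an idle edge in round $r$. Being incomplete with a complete neighbor, $v$ has a missing token and at least one incident edge to a complete node, classified as new, idle, or contributive; the algorithm's priority assigns requests first to new edges, then to idle edges, then to contributive edges. The futile hypothesis forbids requests over contributive edges, and the previous paragraph forbids requests over new edges; hence if $v$ has an idle edge it sends a request over it, and that idle edge is deleted at the end of round $r$. Moreover $v$ cannot lack an idle edge: if all its complete-neighbor edges were new or contributive, then (having a missing token) $v$ would be forced to request over a new or a contributive edge, contradicting the two facts just established. So every bridge node contributes a deleted idle edge, and the $\ell$ contributed edges are distinct, giving the claim.

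I expect the main obstacle to be making the ``no request over a new edge'' step fully rigorous in the presence of the one-round delay of the completeness announcements. If the complete neighbor $w$ reached by a new edge $e$ becomes known to $v$ only at the end of round $r$ (for instance when $e$ is inserted at the beginning of round $r$), then $v$ first issues its request over $e$ in round $r+1$ rather than round $r$; here $3$-edge stability ensures $e\in E_{r+2}$, and it is precisely the ``no token learning in round $r+2$'' clause of the futile-round definition that supplies the contradiction. One must also verify that the node remains incomplete through the relevant rounds and that the requested token is genuinely absent, so that its arrival really is a learning event; this is where the convention that a bridge node is incomplete throughout round $r$, together with the no-learning hypotheses, is used.
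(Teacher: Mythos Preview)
Your proof is correct and follows essentially the same route as the paper's: show that in a futile round no new edge can be present (using $3$-edge stability and the no-learning-in-$r+1$/$r+2$ hypotheses), deduce that every bridge node must send a request over an idle edge (otherwise it would request over a contributive edge, contradicting futility), and conclude that this idle edge is removed since no learning occurs in round $r+1$. Your write-up is slightly more explicit than the paper's in two respects: you spell out why the $\ell$ selected idle edges are pairwise distinct (each has a unique incomplete endpoint), and you anticipate the one-round delay of completeness announcements, explaining why the ``no learning in round $r+2$'' clause of the futile definition is needed for edges inserted at the beginning of round $r$; the paper compresses both points into the phrase ``a token is learned in at least one of rounds $r+1$ or $r+2$'' and the implicit ``from each bridge node.''
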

	
	\begin{proof}
		First, let us show that every bridge node has an adjacent idle edge in round $r$. 
		If there exists a new edge in round $r$, due to the 3-edge stability property and the higher priority of sending requests on new edges, a token is learned in at least one of rounds $r+1$ or $r+2$.
		Hence, there exists no new edge in round $r$. 
		Now, for the sake of contradiction, let us assume that there exists a bridge node $b$ in round $r$ that does not have an adjacent idle edge. 
		Since $b$ cannot have an adjacent new edge either, it must have at least one contributive edge.
		Therefore, $b$ sends a request over at least one of its contributive edges in round $r$, contradicting the assumption that $r$ is a futile round. 
		
		Since every bridge node has an idle edge and no new edge, due to the mentioned priority rules, a bridge node sends a request over at least one of its idle edges. 
		Since no new token is learned in round $r+1$, the idle edge carrying a request must be removed. 
		Hence, from each bridge node at least one idle edge is removed at the end of round $r$. 
	\end{proof}
	
	\begin{lemma}\label{lem:futile-round}
		  In any execution of the Single-Source Unicast Algorithm on a 3-edge stable $n$-node dynamic network, there are at most $n$ futile rounds until the last token request is sent.
	\end{lemma}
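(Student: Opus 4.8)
The plan is to reduce the statement to a bound on the total number of idle-edge removals and then control that quantity with a potential argument. First I would record a structural fact: as long as not all nodes are complete, the round graph $G_r$ is connected and the source is complete, so there is at least one edge between the complete set $C$ and the incomplete set $I=V\setminus C$, and its incomplete endpoint is a bridge node. Hence in every round before the last token request is sent there is at least one bridge node, and by Lemma~\ref{lem:idle-edge} every futile round removes at least one idle edge. Consequently the number of futile rounds is at most the total number of idle edges removed in futile rounds, and it suffices to bound the latter by $n$.

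The second step is to understand exactly which idle edges are removed and to show that futile rounds cannot be ``refueled'' for free. Reexamining the proof of Lemma~\ref{lem:idle-edge}, in a futile round there is no new bridge edge (a new edge would, by new-edge priority together with $3$-edge stability, be serviced and cause a token learning in round $r+1$ or $r+2$, contradicting futility), and no bridge node sends a request over a contributive edge. Thus each bridge node spends its requests on idle edges, and since no learning occurs in round $r+1$, every request-carrying idle edge is deleted at the end of the round. I would then observe that the transition out of a futile round creates no new idle edge: contributive edges stay contributive, and because there were no new edges in the futile round there is no edge ready to ``age'' into the idle category in round $r+1$. Taking $\Phi(r)$ to be the number of idle edges present in round $r$, this shows that $\Phi$ strictly decreases across every futile round, by at least the number of bridge nodes.

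The crux---and the step I expect to be the main obstacle---is to bound by $n$ the total amount by which $\Phi$ can be replenished over the whole execution, so that the total decrease (hence the number of futile rounds) is at most $n$ rather than merely $O(\TC(\calE))$. The lever I would use is that an edge enters the idle category only after being inserted as a new bridge edge and then surviving its new-phase without being requested over; by new-edge priority this forces the incomplete endpoint to be close to complete and to spend all of its (few) outstanding requests on \emph{other} new edges. By $3$-edge stability those sibling requests are answered, producing token learnings that drive that endpoint toward completion. I would therefore try to charge each newly created idle edge to the completion event of its incomplete endpoint, using that each node completes exactly once, while carefully tracking idle edges whose incomplete endpoint later becomes complete (so that such an edge leaves the $C$--$I$ frontier rather than being re-consumed as $C$ grows). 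Making this charging exact---so that it yields the clean bound $n$ and correctly handles edges that change which side of the frontier they lie on---is the delicate part; the easy ingredients are the per-round idle-edge removal and the monotonicity of $\Phi$ within futile rounds established above.
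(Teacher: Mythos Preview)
Your plan has a genuine gap at the ``crux'' step, and it stems from misidentifying how idle edges are created.

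You assert that ``an edge enters the idle category only after being inserted as a new bridge edge and then surviving its new-phase without being requested over.'' In fact, as the paper shows first, this scenario \emph{never} produces an idle edge under $3$-edge stability: if a new bridge edge $\{w,v\}$ (with $w$ complete) is not requested during its new phase, it is because $v$ sent all of its outstanding requests over other new edges; by $3$-edge stability those are answered, so $v$ becomes complete by round $r+2$ and the edge is no longer a complete--incomplete edge at all. So new bridge edges become contributive or cease to be bridge edges; they never become idle. The mechanism you build your charging on does not exist.

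The actual (and only) way an edge becomes idle is the one you do not consider: an edge $\{u,v\}$ with \emph{both} endpoints incomplete in round $r-1$, where one endpoint becomes complete in round $r$. Such an edge was never a bridge edge before, was never new \emph{as a bridge edge}, and carried no token; it is idle the instant one side completes. Because this is tied to the completion of the \emph{complete} endpoint, the paper's invariant is that for every complete node $w$, the number of idle edges incident to $w$ is set once (at $w$'s completion) and is monotone non-increasing thereafter. Your proposed charging to the \emph{incomplete} endpoint's future completion does not control anything: many idle edges can share the same incomplete endpoint, so even if the charge were well-defined you would get $O(n^2)$, not $n$. Likewise, your potential $\Phi=$ (total number of idle edges) can be replenished by up to $n-1$ at a single completion, so bounding its total increase by $n$ is impossible in general. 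The paper instead exploits the per-complete-node monotonicity together with the observation that in a futile round the number of idle edges removed is at least the current maximum idle degree among complete nodes (that node has that many bridge neighbors, each of which loses an idle edge by Lemma~\ref{lem:idle-edge}); this is what drives the bound down to $n$ rather than $n^2$.
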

	
	\begin{proof}
	Let us first argue that it is not possible for a new edge to become idle. 
	For any round $r>0$, consider an arbitrary new edge $e=\{u,v\}\in E^+_r$, where $u$ is complete and $v$ is incomplete. 
	Then in round $r+2$, either $e$ is contributive or $v$ is complete.
	Because, the only case that $v$ does not send a token request over $e$ in rounds $r$ or $r+1$ is when $v$ sends all its left token requests over its other new edges in rounds $r$ or $r+1$.
	Then, due to $3$-edge stability, $v$ will receive its requested tokens by the end of round $r+2$ and becomes complete. 
	Otherwise, $v$ sends a token request over $e$ in rounds $r$ or $r+1$, and hence $e$ becomes contributive by the end of round $r+2$.  
	
	Then, the only case when an edge becomes idle in round $r$, is when both endpoints are incomplete in round $r-1$ and only one of  them becomes complete in round $r$.  
	Since each node $v$ becomes complete only once, the number of $v$'s idle edges never increases throughout the execution after $v$'s completion.

		Now consider an arbitrary futile round where the largest number of idle edges of any complete node in a futile round is $x$. 
		Hence, there exist at least $x$ bridge nodes in that round. 
		Thus, by Lemma \ref{lem:idle-edge}, at least $x$ idle edges are removed at the end of that futile round. 
		As a result, one can see that there cannot be any idle edges, and hence any futile rounds, after having $n$ futile rounds. 
		This shows that the number of futile rounds is at most $n$ until the last token request is sent. 
	\end{proof} 
	
	\begin{theorem}\label{thm:unicast1-time}
		Given $k$ tokens to disseminate, if the dynamic graph is $3$-edge stable, the Single-Source Unicast Algorithm terminates in $O(nk)$ rounds and all the nodes  receive all the $k$ tokens. 
	\end{theorem}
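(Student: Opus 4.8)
The plan is to bound the running time by splitting the rounds (up to the one in which the last token request is sent) into \emph{futile} and \emph{non-futile} rounds and charging the non-futile ones against token learnings. Since all $k$ tokens start at a single complete source and every other node must learn all of them, exactly $k(n-1)$ token learnings occur in the whole execution, so it suffices to show that there are $O(nk)$ non-futile rounds and then invoke Lemma~\ref{lem:futile-round}, which already caps the number of futile rounds at $n$. The first thing I would record is a structural fact that ties the round count to correctness: as long as some node is still incomplete, connectivity of $G_r$ together with the source being complete forces an edge between the complete and incomplete sets, whose incomplete endpoint is a bridge node and therefore issues at least one request in that round. Hence ``no request is issued in round $r$'' is equivalent to ``every node is complete,'' so bounding the number of rounds until the last request both bounds the time and proves full dissemination.

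Next I would unfold the definition of a non-futile round: round $r$ is non-futile exactly when either (Case~1) some token is learned in round $r+1$ or $r+2$, or (Case~2) a request is sent over a contributive edge in round $r$. Case~1 rounds are charged to the nearby learning: a fixed round carrying a learning can be the witness for at most the two rounds preceding it, so Case~1 contributes at most $2k(n-1)$ rounds. The crux---and the step I expect to be the main obstacle---is Case~2, because a failed request over an edge that is then torn down is \emph{a priori} only bounded by the number of edge deletions, i.e.\ by $\TC(\calE)$, which is precisely the quantity we are not allowed to charge into the round count. The key observation that removes this dependence is that if a request is sent over a contributive edge $e=\{v,w\}$ in round $r$ and \emph{no} token is learned in round $r+1$, then $e$ must be removed at the end of round $r$: otherwise the complete endpoint $w$ would answer the request and $v$ would learn the requested (still-missing) token in round $r+1$. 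I would then charge such a round to the \emph{witness learning} guaranteed by contributiveness---the token that was delivered over $e$ during its current lifetime---and argue the charge is injective: the removal of $e$ ends its current lifetime, each lifetime ends at most once, and every learning belongs to a unique lifetime of a unique edge. Thus Case~2 contributes at most $k(n-1)$ rounds.

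Combining the three contributions, the number of rounds until the last request is at most $n + 2k(n-1) + k(n-1) = O(nk)$, and by the structural fact this is exactly the round by which every node becomes complete, yielding both the $O(nk)$ time bound and correctness simultaneously. The delicate points to pin down in the full write-up are (i) that in Case~2 the requested token is genuinely not learned from \emph{any} edge in round $r+1$---otherwise round $r$ already falls under Case~1---so the absence of a learning really does force the deletion of $e$; and (ii) that the three charging schemes together use each of the $k(n-1)$ learnings only $O(1)$ times, which is what keeps the total linear in $nk$ rather than a larger polynomial. I would invoke Lemma~\ref{lem:futile-round} (which in turn rests on Lemma~\ref{lem:idle-edge} and the $3$-edge stability assumption) as a black box for the futile rounds, so that $3$-edge stability enters the argument only through those lemmas.
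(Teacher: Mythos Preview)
Your proposal is correct and follows essentially the same approach as the paper's own proof. The paper frames the count slightly differently---it bounds the number of \emph{pairs} of consecutive rounds $r,r+1$ with no learning by $k'+n$ (splitting on whether a contributive edge carried a request in round $r-1$), whereas you count rounds directly and split non-futile rounds into ``learning-nearby'' and ``contributive-request-but-no-learning'' subcases---but the underlying charging argument (map a removed contributive edge to a token learning that occurred on it during its last lifetime, and use Lemma~\ref{lem:futile-round} for the remainder) is the same in both.
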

	\begin{proof}
		Consider any time $t$ during an arbitrary execution of the Single-Source Algorithm that is not terminated yet.  
		Let $k'$ denote the number of token learnings in $[0,t]$.
		Let us show that the number of periods of two consecutive rounds in $[1,t]$ in which no token is learned is at most $k'+n$.
		This leads to $O(nk)$ running time for the algorithm. 
		
		Let $r$ and $r+1$ be arbitrary two consecutive rounds in $[1,t]$, where no token is learned. 
		Hence, there is no new edge in round $r-1$, otherwise, a token would have been learned in round $r$ or $r+1$ due to the 3-edge stability property and the higher priority of sending token requests on new edges. 
		Then, there are two possibilities:
		\begin{itemize}
			\item{Case 1:}
				At least one contributive edge carries a token request in round $r-1$.
				Since it is assumed that no token is learned in round $r$, the edge must be removed by the adversary at the end of round $r-1$. 
				Therefore, we can map one of the removed contributive edges to round $r$. 
				Doing so, for any such round $r$, a distinct token learning in $[0,t]$ is mapped to $r$ (i.e., one of the token learnings that happened on the removed contributive edge after its last insertion). 
				Therefore, since there is a one to one mapping between such rounds and a subset of token learnings in $[0,t]$, the number of such rounds (i.e., $r$) is not more than the number of token learnings in $[0,t]$. 
			\item{Case 2:}
				No contributive edge carries a token request in round $r-1$. 
				Therefore, round $r-1$ is a futile round. 
				Then, based on Lemma \ref{lem:futile-round}, the number of such rounds (i.e., round $r$) is at most $n$ throughout the execution. 
		\end{itemize}
	\end{proof}
	
 \subsection{Multiple Source Nodes}
\label{multi-source}
	Let us consider a more general case where the tokens are initially given to more than one source node. 
	Assume that there are $s$ source nodes $a_1< a_2< \dots < a_s$ such that for $1\leq i\leq s$, $a_i$ is initially given $k_i$ tokens. 
	Hence, in total $k = \sum_i k_i$ tokens  need to be disseminated.

\subsubsection{Strongly Adaptive Adversary}\label{sec:adaptive}
 \onlyShort{\vspace{-0.1in}}
	To solve this problem against a strongly adaptive adversary, we present a deterministic algorithm with $O(n^2s+nk)+\TC(\calE)$ message complexity. 
	It extends the Single-Source Unicast Algorithm, and has the same running time if the network has the same stability assumption (i.e., 3-edge stability). 
	However, it has a higher message complexity than the Single Source Unicast Algorithm since each node needs to announce its completeness regarding $s$ different source nodes to other nodes in its neighborhood throughout the algorithm execution.  
	
	Since there are more than one source nodes, we need to include the intended source node in the definitions of Section \ref{sec:single-source}. 
	So we say a node is complete \textit{with respect to} source node $a$, if it has received all the tokens originated at $a$.
	Similarly, a node is called a bridge node \textit{with respect to} source node $a$, if it is an incomplete node with respect to $a$ and is connected to a node which is complete with respect to $a$. 
\\[.1cm]
\noindent
\textbf{\textit{Multi-Source-Unicast Algorithm}}\label{alg:unicast2}\\
	The algorithm considers a priority over the dissemination of tokens from different sources.
	To do so, in each round, all nodes give the highest priority to the dissemination of the tokens from the minimum known source node whose dissemination is not yet complete. 
	In the sequel, we explain the details of implementing this idea.	
	
	Initially, each source node $x$ considers an arbitrary order of its tokens and assigns a token identifier containing its own ID and an integer $i$ (i.e., $\langle ID_x,i\rangle$) to its $i^{th}$ token. 
	Moreover, we assume that each source node becomes complete with respect to itself at time $0$.	
	To avoid redundant communication, each node $v$ keeps some information about the execution history by constantly updating the following sets. 
	$R_v(x)$ is the set of all nodes that are informed by $v$ about the $v$'s completeness with respect to $x$. 
	$S_v(x)$ is the set of nodes that informed $v$ about their completeness with respect to $x$. 
	$I_v$ is the set of all source nodes with respect to which $v$ is complete. 
	Then each node $v$ in each round of the execution does the following three tasks in parallel: 
		(1) For each edge $\{v,w\}$, if there is any source node $x$ such that $x\in I_v$ and $w\notin R_v(x)$, it picks the minimum such $x$ and sends ``completeness announcement with respect to $x$'' to $w$;
		(2)  For each edge $\{v,w\}$, if $v$ received a request for token $\tau$ from $w$ in the previous round, then it sends $\tau$ to $w$; (3) Node $v$ picks the minimum $x$ such that $x\notin I(v)$ and $S_v(x)\neq \emptyset$. 
		Then, regarding sending token requests, it acts similarly to the Single-Source Unicast Algorithm as there exists only the single source $x$ in the network.

\onlyShort{The proof of the following Theorem is similar to the  Proof of Theorem \ref{thm:unicast-msg}, i.e., the single-source case.}

	\begin{theorem} \label{thm:unicast-multi1-msg}
		To disseminate $k$ tokens which are initially distributed among $s$ source nodes, Multi-Source Unicast Algorithm has a 1-adversary-competitive message complexity of $O(n^2s+nk)$.  
	\end{theorem}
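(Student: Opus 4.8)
The plan is to mirror the three-way message partition used in the proof of Theorem~\ref{thm:unicast-msg} for the single-source case, tracking where the dependence on the number of sources $s$ enters. As there, every message sent by the Multi-Source Unicast Algorithm is of one of three types: (1) a token, (2) a completeness announcement, or (3) a token request. I would bound the total count of each type separately and then sum, invoking Definition~\ref{def:advcomp_msgcompl} at the end.

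For type-1 messages, the single-source argument carries over essentially verbatim: there are $k=\sum_i k_i$ distinct tokens in total, and the request/reply discipline guarantees that each node stops requesting a token once it has received it, so each token is delivered to each node at most once. Hence there are at most $O(nk)$ token messages. For type-3 messages, I would reuse the same mapping argument: each token-request message sent over an edge in some round either results in a successful delivery in the next round, or the edge is removed before the delivery can occur. The successful deliveries inject into the at most $nk$ token learnings, and every unsuccessful request corresponds to a distinct edge deletion. Since the execution starts from the empty graph $G_0$, the number of edge deletions is bounded by $\TC(\calE)$ (as noted in Section~\ref{sec:model}), so type-3 messages total at most $O(nk)+\TC(\calE)$. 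I would remark that the per-round prioritization in task~(3) of the algorithm---requesting only for the minimum incomplete known source---does not affect this count, since we bound the aggregate number of requests, not the number per round.

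The one place where the factor $s$ appears, and the main point requiring care, is the count of type-2 messages. Here \emph{completeness} is relative to a source: a node is complete \emph{with respect to} source $x$ once it holds all tokens originating at $x$. The bookkeeping set $R_v(x)$ is designed so that $v$ announces its completeness with respect to $x$ to any fixed neighbor at most once over the entire execution. Since there are $n$ possible senders, at most $n-1$ recipients per sender, and $s$ sources, the total number of completeness announcements is at most $O(n^2 s)$. I would emphasize that the invariant maintained jointly by $R_v(x)$, $S_v(x)$, and $I_v$ is exactly what rules out redundant announcements; verifying that no (sender, receiver, source) triple ever generates two announcements is the crux of this step.

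Summing the three bounds yields a total message complexity of $O(nk)+O(n^2 s)+\big(O(nk)+\TC(\calE)\big)=O(n^2 s+nk)+\TC(\calE)$, which by Definition~\ref{def:advcomp_msgcompl} is a $1$-adversary-competitive message complexity of $O(n^2 s+nk)$, as claimed. The single-source analysis does almost all of the work; the only genuinely new accounting is the $s$-fold blow-up in completeness announcements, so I expect the main obstacle to be largely notational---carefully carrying the ``with respect to $x$'' qualifier through the type-1 and type-3 arguments and confirming that the $R_v(x)$/$S_v(x)$/$I_v$ discipline keeps each announcement and each request at most once per relevant triple.
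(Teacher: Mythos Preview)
Your proposal is correct and follows essentially the same approach as the paper's proof: the same three-way partition into token messages, completeness announcements, and token requests, with the single-source bounds of $O(nk)$ and $O(nk)+\TC(\calE)$ carried over unchanged for types~1 and~3, and the factor~$s$ entering only in the type-2 count via the per-source bookkeeping $R_v(x)$. The paper's own write-up is slightly terser but makes exactly the same point that the only difference from Theorem~\ref{thm:unicast-msg} is the $s$-fold blow-up in completeness announcements.
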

	\onlyLong{
	\begin{proof}\label{prf:unicast2-msg-complexity}
	Arguing the message complexity of Multi-Source Unicast Algorithm is almost similar to the proof of Theorem \ref{thm:unicast-msg}.
	Similarly, we consider the three different types of messages throughout the algorithm execution; (1) token, (2) completeness announcement, and (3) token request.
	The number of tokens of type 1 and 3 is exactly the same as running the Single Source Unicast Algorithm. 
	However, the number of messages of type 2 differs. 
	In case of running the Single Source Unicast Algorithm, each node needs to inform any other node in its neighborhood about its completeness once throughout the algorithm execution.
	The reason is that there is only one source node, and each node achieves completeness just regarding the only source node in the network. 
	But in case of running Multi-Source Unicast Algorithm, each node becomes complete regarding $s$ different source nodes.
	Therefore, each node should announce its completeness regarding each of the $s$ source nodes to every other node in its neighborhood throughout the algorithm execution, which leads to $O(n^2s)$ messages in total. 
	As a result, $O(nk)$ messages of type 1, $O(n^2s)$ messages of type 2, and $O(nk)+\TC(\calE)$ messages of type 3 proves the 1-adversary-competitive message complexity of $O(n^2s+nk)$ for Multi-Source Unicast Algorithm. 
\end{proof}
	}
	
	\onlyLong{
	\begin{theorem}\label{thm:unicast-multi1-time}
		Given $k$ tokens to disseminate, if the dynamic graph is 3-edge stable Multi-Source Unicast Algorithm terminates in $O(nk)$ rounds and all the nodes have received all the $k$ tokens.
	\end{theorem}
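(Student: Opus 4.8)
The plan is to lift the single-source time analysis (Theorem~\ref{thm:unicast1-time}) to the multi-source setting, exploiting the fact that in each round every node acts on exactly one source---the minimum source $x$ with $x\notin I_v$ and $S_v(x)\neq\emptyset$. The first step is to reinterpret all the structural notions of Section~\ref{sec:single-source} \emph{with respect to a source}: a node is complete/incomplete and an edge is new/contributive/idle relative to the particular source that its incident incomplete endpoint is currently serving. Because each node, when it issues token requests, behaves exactly as in the Single-Source Unicast Algorithm for its chosen source $x$, I expect Lemma~\ref{lem:idle-edge} and Lemma~\ref{lem:futile-round} to apply verbatim to each source, provided ``bridge node'' is read as ``bridge node with respect to $x$'' and ``futile round'' is defined so that no contributive request (for any source) is sent and no token is learned anywhere in the next two rounds.

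Then I would bound the total running time by the same two-quantity counting argument as in the proof of Theorem~\ref{thm:unicast1-time}: the number of periods of two consecutive rounds in which no token is learned anywhere is at most the number of token learnings plus the number of futile rounds. The number of token learnings is at most $k(n-1)=O(nk)$, since each of the $n$ nodes learns each of the $k$ tokens exactly once. For the futile rounds, the key monotonicity underlying Lemma~\ref{lem:futile-round}---that a node's supply of idle edges relative to a source can only decrease once the node becomes complete with respect to that source---now applies to each of the $ns$ pairs (node, source) rather than to the $n$ nodes. Re-running that argument per source should yield at most $O(n)$ futile rounds attributable to each source, hence $O(ns)$ futile rounds in total.

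Combining these bounds, I would conclude that the number of no-learning round-pairs is $O(nk)+O(ns)$; since every source holds at least one token we have $s\le k$, so $ns\le nk$ and the total is $O(nk)$, giving the claimed round bound. For correctness, I would argue that the algorithm cannot halt while some node is still incomplete with respect to some source: by the connectivity of each $G_r$, any source whose dissemination is unfinished admits a bridge node with respect to it, which issues a pending request, so requests keep being sent until every node is complete with respect to every source. Together with the finite round bound, this shows that all $n$ nodes receive all $k$ tokens within $O(nk)$ rounds.

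The main obstacle I anticipate is the futile-round accounting. In the single-source proof all nodes serve the same source, so a round with no token learned cleanly forces idle-edge removals; in the multi-source case distinct nodes may serve distinct sources in the same round, so I must verify that a globally futile round still forces idle-edge removal for \emph{every} source that currently has an active bridge node, and that charging these removals to the $ns$ per-(node, source) completion events does not introduce an extra factor beyond the $O(ns)$ budget. Checking that the per-source instances of Lemma~\ref{lem:idle-edge} compose without interference---that an edge idle with respect to $x$ being removed is correctly charged and never double-counted across sources---is where I would concentrate the care, since this is precisely what keeps the bound at $O(nk)$ rather than blowing up by a factor of $s$.
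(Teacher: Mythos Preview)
The paper's proof takes a much simpler route that sidesteps the obstacle you identify. Rather than re-running the futile-round analysis per source and then aggregating, the paper observes that the priority rule effectively forces the sources to be processed \emph{sequentially}: since every node always gives highest priority to the minimum source whose dissemination is not yet complete, with respect to the globally minimum unfinished source $a_i$ the algorithm behaves exactly like the Single-Source Unicast Algorithm, without interference from the other sources. Theorem~\ref{thm:unicast1-time} therefore applies verbatim to $a_i$'s $k_i$ tokens and yields at most $cnk_i$ rounds until every node is complete with respect to $a_i$; at that point $a_{i+1}$ becomes the globally minimum unfinished source and the argument repeats. Summing over $i$ gives $\sum_i cnk_i = cnk = O(nk)$. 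No per-source adaptation of Lemmas~\ref{lem:idle-edge} or~\ref{lem:futile-round} is needed.

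The obstacle you anticipate in your own approach is real and is precisely what the paper's reduction avoids. In a globally futile round, a bridge node $b$ with respect to source $x$ may be serving some strictly smaller source $x'$ (because $x'\notin I_b$ and $S_b(x')\neq\emptyset$), in which case $b$ issues no $x$-request at all and Lemma~\ref{lem:idle-edge} gives no idle-edge removal for $x$; your claimed bound of $O(n)$ futile rounds per source does not follow from this line of argument. By contrast, in the paper's sequential view, during the phase devoted to $a_i$ every node already has $a_1,\dots,a_{i-1}\in I_v$, so any incomplete node with a complete-with-respect-to-$a_i$ neighbor is necessarily serving $a_i$, and the single-source lemmas apply without modification.
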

	
	\begin{proof}
		Theorem \ref{thm:unicast1-time} states when all the $k$ tokens are initially given to one source node, by running Single-Source Unicast Algorithm, $k$-token dissemination is complete in at most $O(nk)$ rounds. 
		Multi-Source Unicast Algorithm guarantees that the minimum ID source node that its token dissemination is not complete yet runs the Single-Source Unicast Algorithm without any interference until its token dissemination is complete. 
		It is guaranteed by having all the nodes giving the highest priority to the token dissemination of the the minimum ID source node with incomplete token dissemination. 
		
		Therefore, if the Single-Source Unicast Algorithm solves $k$-token dissemination in $cnk$ rounds for some constant $c$, then the token dissemination of the first minimum ID source node is complete after $cnk_1$ rounds and the second one after the next $cnk_2$ rounds and so on. 
		Hence, the whole running time is $O(nk)$, where $k = \sum_{i=1}^s k_i$.

	\end{proof}
	}

 \onlyShort{\vspace{-0.3in}}
\subsubsection{Oblivious Adversary}
\label{sec:oblivious-adv}	
 \onlyShort{\vspace{-0.1in}}
	In case the ratio of the number of disseminated tokens to the number of source nodes is large enough, i.e., $k/s=\Omega(n)$, the algorithm presented in Section~\ref{sec:adaptive} has an efficient linear amortized message complexity.
	However, for example, in case of having $\Omega(n)$ source nodes and $O(n)$ tokens to be disseminated, the amortized message complexity of the algorithm would be $\Omega(n^2)$ due to Theorem~\ref{thm:unicast-multi1-msg}. In this section, we focus on instances with large number of source nodes and $o(n^2)$ tokens in total are distributed arbitrarily among the source nodes.
	Assume that the number of source nodes and the total number of tokens are initially known to the nodes.
	Then, we show that by weakening the adversary from an adaptive one to an oblivious one, a better amortized message complexity can be achieved when the ratio $k/s$ is small.
	Hence in the sequel we assume that $k/s=o(n)$ and $k=o(n^2)$.

	The key idea is to efficiently reduce the number of source nodes and then simply run the Multi-Source-Unicast algorithm for this smaller set of sources. 
	Hence, the algorithm runs in two phases. 
	In the first phase, a (small) subset of nodes is chosen as new source nodes, and all the tokens are efficiently sent to these new source nodes. Let us call the new source nodes \textit{centers}. 
	Then, in the second phase, the Multi-Source-Unicast algorithm is executed with the centers as the source nodes. 
	 
	Let us now explain the first phase in details. 
	If the number of source nodes is less than $n^{2/3}\log^{5/3} n$, nothing is done in the first phase and the second phase is started right away by running the Multi-Source-Unicast algorithm (by considering all the source nodes as centers). 
	Therefore, in the sequel, let us assume that the number $s$ of source nodes is more than $n^{2/3}\log^{5/3} n$. 
	We aim to reduce the number of source nodes from $s$ to $f$, where parameter $f$ denoting the number of centers will be determined later. 
	Then, the $f$ centers own all the tokens at the end of the first phase.
	
	Each node independently marks itself as a center with probability $f/n$. 
	Therefore, in expectation, there are $f$ centers. 
	Then, each token owned by any source node (which is not marked as a center) needs to reach to some center. 
	The tokens owned by one source node may reach different centers.
	However, each token is owned by exactly one center at the end of the first phase. 
	To have this new token assignment, each of these tokens performs a random walk (in parallel) until they reach a center. 
	Once a token reaches a center, it stops there and the center owns the token.  	
	Since in expectation, there are $f$ uniformly random centers among the $n$-nodes, any fixed set of $O(n \log n/f )$ distinct nodes must have at least one center with high probability (w.h.p.). That is, each random walk token has to visit $\Omega(n\log n/f)$ distinct nodes to guarantee that it hits a center w.h.p.
	For this, we apply a known random walk visit bound (see Lemma~\ref{lem:lyon-dynamic} below) for the dynamic setting \cite{sarmaMP15}.  
	
	To perform the desired random walks, we construct a virtual $n$-regular multigraph by adding an appropriate number of self-loops to the network at each round. 
	To do so, for any round $r$, each node with degree $\delta$ in the graph adds $n-\delta$ virtual self-loops as its adjacent virtual edges. 
	Note that a random walk step on a virtual edge is not count in the message complexity, but it increases the time complexity. 
	Due to the assumed bandwidth restriction (i.e., congestion) of the actual edges, not necessarily all the tokens perform a random walk step in each round.  
	Therefore, we say a token is \textit{active} in a round when it performs a random walk step whether it traverses an actual or virtual edge. 
	Otherwise, we say that the token is \textit{passive}.
	Consider $\gamma=(n\log n)/f$ as a predefined degree threshold. 
	We call a node with degree larger than $\gamma$ a \textit{high-degree} node; otherwise it's a \textit{low-degree} node. 
	Recall that a high-degree node must have at least one center among its neighbors with high probability.
	\onlyLong{
\begin{algorithm}[t]
\caption{\sc Oblivious-Multi-Source-Unicast}\label{alg:oblivious-unicast}
\textbf{Input to each node:} Number of source nodes $s$ and total number of tokens $k$\\
\textbf{Output:} Every node receive all the $k$ tokens \\
\begin{algorithmic}[1]
	\If {$s \leq n^{2/3\log^{5/3} n}$}
		\State Run {\sc Multi-Source-Unicast} algorithm with the $s$ source nodes \label{leb:multi-source-algo}
	\ElsIf{$s > n^{2/3}\log^{5/3} n$} \Comment{[Phase~1: Reducing no. of source nodes to $f = n^{1/2} k^{1/4}\log^{5/4} n$ centers]}
		\State Each node elects and marks itself as a center with probability $f/n$  
		\For{round $r = 1, 2, \dots \ell$} \label{leb:iteration} \Comment{[$\ell = k^{\frac{1}{4}}n^{\frac{5}{2}}\log^{\frac{9}{4}} n$]}
			
			\State Each node $u$ owning at least one token does the following for each token $\tau$:
			\If{$d(u) < n^{1/2}(k\log n)^{-1/4}$} \Comment{[low degree; $d(u)$ is degree of $u$ in round $r$]} 
				\State With probability $1/d(u)$, go to Step $9$, and otherwise Step $10$ 
				\State Send $\tau$ to a random neighbor  \Comment{[If congestion allows, otherwise keep the token]}
			\ElsIf{$d(u) \geq n^{1/2}(k\log n)^{-1/4}$} \Comment{[high degree]} 
			\State Send one token (if any) to each of the neighboring centers 
			\EndIf
		\EndFor
		\State Go to Step~\ref{leb:multi-source-algo} with $s =f$ \Comment{[Phase~2: Run {\sc Multi-Source-Unicast} algorithm]}
	\EndIf
\end{algorithmic}
\end{algorithm}
}

	Consider an arbitrary low-degree node $v$ with degree $\delta_v$, and let $T$ be the set of tokens at node $v$ at the beginning of round $r$.
	Node $v$ processes each token $\tau$ in $T$ as follows. 
	With probability $1-\delta_v/n$, token $\tau$ traverses a self-loop, i.e., it remains at node $v$. 
	With probability $\delta_v/n$, $v$ chooses one of its adjacent edges $e$ uniformly at random, and if $v$ has not yet sent any token over $e$ in round $r$, token $\tau$ is sent over $e$.
	Therefore, a token at a low-degree node might be passive in a round because of the congestion for the edges. 
	Now consider a high-degree node $u$ with degree $\delta_u$ in round $r$. 
	Then w.h.p. node $u$ has at least one center among its neighbors. 
	To each of its neighboring centers, $u$ sends one of the tokens owned by node $u$ (if any) at the beginning of round $r$. 
	Since the number of $u$'s neighboring centers might be less than the number of tokens at node $u$, not necessarily all the tokens at node $u$ are sent to the neighboring centers in the round $r$. 
	Therefore, a token at $u$ is passive until it is either sent to one of $u$'s neighboring centers, or the degree of $u$ becomes lower than the threshold and the token resumes the random walk. This way a token continues walking until it reaches a center. \onlyLong{The pseudocode is given in Algorithm~\ref{alg:oblivious-unicast}.}

\noindent {\bf Analysis.}
	Consider the random walk of an arbitrary token $\tau$ in the given dynamic graph $G$. 
	As explained in the algorithm description, token $\tau$ is not necessarily active in all rounds throughout the algorithm execution.
	Let $G_\tau$ denote the (not necessarily consecutive) subsequence of $G$ such that $\tau$ is active in each and every graph in $G_\tau$. 
	In each graph in $G_\tau$ (except the last one), token $\tau$ is sent from a node $u$ to a node $v$ such that $u$ is a low-degree node.
	Therefore, all the nodes visited by $\tau$ in $G_\tau$ have actual degree at most $\gamma$.
		
\onlyLong{\vspace{.2cm}}		
\begin{lemma}[Lemma~6.7 in \cite{sarmaMP15}]\label{lem:lyon-dynamic}
Let $\mathcal{G}$ be a $d$-regular dynamic graph controlled by an oblivious adversary. Let $N_x^t(y)$ denote the number of visits of a random walk to vertex $y$ by time $t$, given that the random walk started at node $x$. $N_x^t(y)$ could be zero or a positive number. Then for any nodes $x$, $y$ and for all $t = O(\tau_{mix})$, where $\tau_{mix}$ is the (dynamic) mixing time of $\mathcal{G}$, 
$\,\Pr \left(N_x^t(y) \geq 2^{c+3}\cdot d\sqrt{t + 1} \log n \right) \leq 1/n^c$, for any constant $c$.
\end{lemma}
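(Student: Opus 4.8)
The plan is to prove the bound in two stages: first control the \emph{expected} number of visits $\E[N_x^t(y)]$, and then boost this to a high-probability tail bound using the renewal (strong Markov) structure of the walk. The crucial feature I would exploit throughout is that the adversary is \emph{oblivious}: the graph sequence $G_1,G_2,\dots$ is fixed before the walk starts, so $(X_s)_{s\ge 0}$ is a genuine time-inhomogeneous Markov chain whose step-$s$ transition matrix $P_s$ is the (symmetric, doubly stochastic) normalized adjacency matrix of $G_s$. Since $\mathcal{G}$ is $d$-regular, the uniform distribution is stationary for every $P_s$ and each $P_s$ is reversible; this reversibility is what the whole argument rests on. (Against an adaptive adversary neither the Markov property nor reversibility survives, which is why the lemma is stated only for the oblivious model.)

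For the expectation, write $N_x^t(y)=\sum_{s=0}^t \mathbf{1}[X_s=y]$, so that $\E[N_x^t(y)]=\sum_{s=0}^t \Pr[X_s=y]$. The key estimate is a pre-mixing decay bound on the one-step visit probability, namely $\Pr[X_s=y]=O(d/\sqrt{s})$. For a single reversible chain this is the classical statement that the return probability $p^{(s)}(y,y)$ is controlled by $O(d/\sqrt{s})$ before the chain has mixed, which one derives from the spectral decomposition (even-step return probabilities are monotone) together with a Cauchy--Schwarz bound relating $p^{(s)}(x,y)$ to the geometric mean of return probabilities. Summing $O(d/\sqrt{s})$ over $s=0,\dots,t$ gives $\E[N_x^t(y)] = O(d\sqrt{t+1})=:\mu$, uniformly over the starting vertex $x$ and, by shifting the time window, over the starting time as well (this uniformity is what the next stage needs). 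I expect \textbf{this step to be the main obstacle}: one must carry the static reversible return-probability estimate over to a \emph{product} of distinct symmetric stochastic matrices $P_1P_2\cdots P_s$, which is no longer symmetric; the right move is to split the product at its midpoint and apply Cauchy--Schwarz, using that every factor has operator norm $1$ and shares the uniform stationary distribution, and to invoke the hypothesis $t=O(\tau_{mix})$ so that the sum stays in the regime where the $1/\sqrt{s}$ decay dominates.

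To pass from $\E[N_x^t(y)]\le\mu$ to the claimed tail, I would use the strong Markov property at successive visits to $y$. Started from \emph{any} vertex $z$ at \emph{any} time within the window, the expected number of remaining visits to $y$ is at most $\mu$, so Markov's inequality gives that, conditioned on having already visited $y$, the probability of accumulating a further block of $\lceil 2\mu\rceil$ visits is at most $1/2$. Because the walk is a bona fide Markov chain (obliviousness again), these conditional estimates compose, yielding $\Pr[N_x^t(y)\ge 2\mu\, r]\le 2^{-r}$ for every integer $r\ge 1$. Finally I would set $r=\Theta(c\log n)$ so that $2^{-r}\le n^{-c}$; absorbing the constant from $\mu=O(d\sqrt{t+1})$ into the stated coefficient $2^{c+3}$ then gives $\Pr\!\left(N_x^t(y)\ge 2^{c+3} d\sqrt{t+1}\,\log n\right)\le 1/n^c$, as required. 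The linear-in-$\log n$ dependence in the threshold is exactly what one expects from the geometric (exponential-in-$r$) tail produced by the iterated Markov argument.
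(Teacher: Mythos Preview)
The paper does not prove this lemma at all: it is quoted verbatim as Lemma~6.7 of \cite{sarmaMP15}, and immediately afterwards the authors write ``We refer to \cite{sarmaMP15} for more details.'' There is therefore no proof in the present paper to compare your proposal against.

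For what it is worth, your two-stage plan---first bound $\E[N_x^t(y)]$ by $O(d\sqrt{t+1})$ via a pointwise transition-probability estimate, then upgrade to a high-probability tail by iterating Markov's inequality through the strong Markov property---is the standard shape such visit bounds take for reversible chains, and your identification of the delicate step (carrying the $O(d/\sqrt{s})$ decay across a product of \emph{distinct} symmetric doubly-stochastic matrices, which is not itself symmetric) is accurate. Whether the split-and-Cauchy--Schwarz maneuver you sketch actually yields the needed decay in the time-inhomogeneous setting is exactly the technical content that lives in \cite{sarmaMP15}; the present paper neither supplies nor sketches it, so to verify your outline you would have to consult that reference directly.
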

\onlyLong{\vspace{.2cm}}		

	The above lemma holds for any random walk with an arbitrary graph sequence provided by an oblivious adversary. 
	We refer to \cite{sarmaMP15} for more details.
	It states that a random walk of length $L$ on a $d$-regular dynamic graph visits at least $L/(2^{c+3}d\sqrt{L + 1} \log n)$ i.e., $\Omega(\sqrt{L}/d\log n)$ distinct nodes with high probability (for $c = 4$). 
	Since only token traversal over the actual edges increases the message complexity, regarding Lemma~\ref{lem:lyon-dynamic}, (to analyze the worst case message complexity) we only consider the upper bound for the actual degree of all the visited nodes by $\tau$, which is $\gamma$.
	 To have $\tau$ performing $L$ actual steps, the walk takes at least $\Theta(nL/\gamma)$ steps w.h.p. on the constructed $n$-regular multigraph (using standard Chernoff bound). 
	 Therefore, due to Lemma~\ref{lem:lyon-dynamic}, $\tau$ visits $\Omega\left((\sqrt{nL/\gamma})/n\log n\right) = \Omega\left(\sqrt{L/(\gamma n \log^2 n)}\right)$ distinct nodes.
	 As we discussed earlier, to have $\tau$ visiting a center during its walk w.h.p., it is enough that $\tau$ visits at least $(n\log n)/f$ distinct nodes. 
	 Thus, we get $L = \Omega\left((n^4\log^5 n) /f^3\right)$,  by setting $\left(\sqrt{L/(\gamma n \log^2 n)}\right) \geq (n\log n)/f$ and $\gamma = (n\log n)/f$. This implies that each token performs a random walk of length at least $(n^4\log^5 n)/f^3$ to guarantee that it reaches a center w.h.p. Since this is true for an arbitrary random walk token w.h.p, by union bound, it is also true for all the tokens.

	 The following theorem \onlyShort{ (proof deferred to the full paper \cite{AKKMP18}) }shows that by setting the parameters properly, the desired message complexity is achieved.
\begin{theorem}\label{app:thm:main-msg-oblivious}
There is an algorithm with message complexity $O(n^{5/2}k^{1/4}\log^{\frac{5}{4}} n)$ to disseminate $k = o(n^2)$ tokens from $\Omega(n^{2/3}\log^{5/3} n)$ source nodes in a dynamic network, in which the topology is controlled by an oblivious adversary. Hence, the amortized message complexity of the algorithm is $O((n^{5/2}\log^{\frac{5}{4}} n)/k^{3/4})$.
\end{theorem}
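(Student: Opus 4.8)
The plan is to write the total message complexity as the sum of the two phases of the \textsc{Oblivious-Multi-Source-Unicast} algorithm and then pick the number of centers $f$ to balance them, so the theorem reduces to bookkeeping on top of the per-token walk-length bound already established above. I would first fix $f = n^{1/2}k^{1/4}\log^{5/4} n$ and the matching degree threshold $\gamma = (n\log n)/f = n^{1/2}(k\log n)^{-1/4}$, and check that this choice is admissible in the regime $k/s = o(n)$, $k = o(n^2)$: namely that $f \le s$ (so the reduction is genuine) and that $f = \Omega(\sqrt n)$ is polynomially large (so that the random marking concentrates).

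For Phase~1 I would charge only traversals over actual edges. Each token $\tau$ needs $L = O\!\big((n^4\log^5 n)/f^3\big)$ actual steps to reach a center \whp, exactly as derived above from Lemma~\ref{lem:lyon-dynamic}; the $\Theta(nL/\gamma)$ self-loop steps and the congestion-induced waits generate no messages, and the terminal send to a neighboring center at a high-degree node costs $O(1)$ per token. Summing over all $k$ tokens, Phase~1 sends $O(kL) = O\!\big(k\,n^4\log^5 n / f^3\big)$ messages \whp, and substituting $f^3 = n^{3/2}k^{3/4}\log^{15/4} n$ turns this into $O(n^{5/2}k^{1/4}\log^{5/4} n)$.

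For Phase~2 the (at most) $f$ centers become the sources of \textsc{Multi-Source-Unicast}, so by Theorem~\ref{thm:unicast-multi1-msg}, which holds against the weaker oblivious adversary as well, the cost is $O(n^2 f + nk)$ (inheriting the $+\,\TC(\calE)$ term of the $1$-adversary-competitive guarantee, charged to the adversary). With the chosen $f$ we get $n^2 f = n^{5/2}k^{1/4}\log^{5/4} n$, while $nk$ is dominated because $k = o(n^2)$ implies $nk = o(n^{5/2}k^{1/4}\log^{5/4}n)$. Indeed $f$ is picked precisely to equate the $n^2 f$ term of Phase~2 with the $k n^4\log^5 n/f^3$ term of Phase~1, i.e.\ $f^4 = k n^2 \log^5 n$. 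Adding the two phases gives total message complexity $O(n^{5/2}k^{1/4}\log^{5/4} n)$, and dividing by $k$ yields the amortized bound $O\!\big(n^{5/2}\log^{5/4} n / k^{3/4}\big)$.

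I expect the main obstacle to be discharging the several high-probability events so they hold simultaneously. Three need attention: (i) the number of centers is $\mathrm{Binomial}(n,f/n)$, so a Chernoff bound gives $\Theta(f)$ centers \whp\ (using that $f$ is polynomially large); (ii) every high-degree node has a center neighbor \whp, since any fixed set of $\Omega((n\log n)/f)$ nodes misses all centers with probability only $(1-f/n)^{\Omega((n\log n)/f)} = n^{-\Omega(1)}$, followed by a union bound over nodes; and (iii) the single-token walk-length guarantee must be lifted to all $k = o(n^2)$ tokens by a union bound, which is affordable since Lemma~\ref{lem:lyon-dynamic} drives the per-token failure probability down to $n^{-c}$ for an arbitrarily large constant $c$. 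A subtler check is that congestion only ever renders a token passive and never changes the distribution of which actual edge it eventually takes, so the process remains the intended random walk on the $n$-regular virtual multigraph and Lemma~\ref{lem:lyon-dynamic} applies verbatim.
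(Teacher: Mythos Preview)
Your proposal is correct and follows essentially the same approach as the paper: bound Phase~1 by $O(kL)$ with $L=O((n^4\log^5 n)/f^3)$, bound Phase~2 by $O(n^2f+nk)$ via Theorem~\ref{thm:unicast-multi1-msg}, and balance the two by solving $kL=fn^2$ to obtain $f=n^{1/2}k^{1/4}\log^{5/4}n$. Your additional bookkeeping (admissibility of $f$, union bounds over the three high-probability events, and the observation that congestion only pauses tokens without biasing the walk) is more explicit than the paper's treatment but does not change the argument.
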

\onlyLong{
\begin{proof}
	In the first phase, at most $k$ tokens perform random walks of $L$ (actual) steps each to reach some center. 
	Note that this excludes message cost for the self-loop (virtual) edges. Therefore, it costs $kL$ messages in the first phase. 
	In the second phase, we run Multi-Source-Unicast algorithm with $f$ source nodes. 
	Due to Theorem~\ref{thm:unicast-multi1-msg}, therefore, the message complexity of the second phase is $O(fn^2 + nk)$. 
	Thus, the total message complexity is $O(kL + fn^2 + nk)$. 
	Parameter $f$ is sub-linear in $n$, and $L = \Omega\left((n^4\log^5 n\right) /f^3)$.
	Hence, $L$ is larger than $n$, and consequently $kL > kn$. 
	The message complexity is $O(kL + fn^2)$. 
	To fix parameter $f$, let us optimizing the sum $(kL + fn^2)$ as follows.
\begin{align*}
 & kL = fn^2 \\
 \Rightarrow & \, L = fn^2/k \\ 
 \Rightarrow  &\, n^4\log^5 n/f^3 = fn^2/k  \hspace{.7cm} [\text{Substituting $L = (n^4\log^5 n) /f^3$}]\\
 \Rightarrow & \, f = n^{1/2} k^{1/4}\log^{5/4} n
\end{align*}
Thus, the total message complexity is $O(fn^2) = O(n^{\frac{5}{2}} k^{\frac{1}{4}}\log^{\frac{5}{4}} n)$. 

Therefore, the amortized message complexity to disseminate $k$ tokens is 
$$O(n^{\frac{5}{2}} k^{\frac{1}{4}}\log^{\frac{5}{4}} n)/k = O\left(\frac{n^{\frac{5}{2}}\log^{\frac{5}{4}} n}{k^{\frac{3}{4}}}\right).$$ 
\end{proof}
}
The following table highlights the amortized message cost for different sizes of the token set. Recall that, by our assumption $s \geq n^{2/3}\log^{5/3} n$ and $k=o(n^2)$, and $k\geq s$ always. 
\begin{table}[h]
\footnotesize
\centering
  \begin{tabular}{ | >{\centering\arraybackslash} m{6.5cm} | >{\centering\arraybackslash} m{6.5cm} |}
    \hline
    Number of disseminated tokens ($k$) & Amortized message complexity \\ \hline \hline
    \rule{0pt}{11pt} $O(n^{\frac{2}{3}}\log^{\frac{5}{3}} n)$ &  $O(n^2)$ \\
    \hline
     \rule{0pt}{11pt} $O(n)$ & $ O(n^{\frac{7}{4}}\log^{\frac{5}{4}} n)=o(n^2)$\\
    \hline
     \rule{0pt}{11pt} $O(n^{\frac{3}{2}})$ & $O(n^{\frac{11}{8}}\log^{\frac{5}{4}} n)$ \\
    \hline
     \rule{0pt}{11pt} $O(n^2)$ & $O(n\log^{\frac{5}{4}} n)$\\
    \hline
  \end{tabular}
  \caption{The amortized message complexity for different number of tokens.}
  \label{tab:results}
\end{table}

\noindent {\bf Remark.}
	As mentioned before, in case of having less than $n^{\frac{2}{3}}\log^{\frac{5}{3}} n$ source nodes, Multi-Source-Unicast algorithm is executed.
	It is a deterministic algorithm, and hence works properly against an oblivious adversary. 
	The total message cost of Multi-Source-Unicast Algorithm is $O(n^2s + nk)$  (cf. Theorem~\ref{thm:unicast-multi1-msg}).
	Therefore, the amortized message complexity is $ O(\frac{n^{2}s}{k} + n)$, which is upper bounded by $O(n^2)$, since the number of tokens is always larger than the number of source nodes, i.e., $s/k \leq 1$.
	Therefore, when the number of source nodes is less than $n^{\frac{2}{3}}\log^{\frac{5}{3}} n$, Multi-Source-Unicast algorithm is more efficient.
\onlyShort{In the full paper \cite{AKKMP18}, we also analyze the running time of the algorithm.}

\onlyLong{
	Now let us analyze the running time of the algorithm. 
	Since there are total $k = o(n^2)$ tokens and at least $s = n^{2/3}\log^{5/3} n$ source nodes, a source node may have as many as $O(k-s)$ tokens to disseminate in the beginning. 
	Further, since the dynamic graph is $n$-regular, as many as $O(n)$ tokens from each node can be executed in parallel with at most $O(\log n)$ congestion over an edge. The reason is that if each node starts $O(n)$ random walks in parallel, in expectation, each edge carries at most $2$ walks (from both ends) in each round, and hence there will be at most $O(\log n)$ congestion over an edge with high probability. 
	Therefore, to perform $O(k-s)$ random walks (corresponding to $O(k-s)$ tokens from a source node) in parallel, there would be at most $O((k-s)\log n/n)$ delay per step w.h.p. 
	Another reason for a delay in the random walk of a token is that the token is at a high-degree node in some round and the number of neighboring centers is less than the number of tokens at that node in that round. 
	Note that the number of such rounds is at most $k$, since in each such (delay) round there is at least one token that is being sent to a center. 
	
	Since the length of the random walks (including virtual steps\footnote{The virtual steps are counted towards running time of the algorithm.}) is $O(nL)$ (assuming the worst case actual degree $O(1)$ for the running time), the total time of the first phase is $O((k-s)\log n/n \cdot (nL)+k)$ rounds. 
	Since the second phase is the execution of Multi-Source-Unicast algorithm, it takes $O(nk)$ time with the additional natural condition that the dynamic graph is $3$-edge stable, as follows from Theorem~\ref{thm:unicast-multi1-time}. 
	Hence, the total running time in phase~1 and phase~2 is $O((k-s)L\log n + k + nk)$ rounds. 
	The time bound becomes $O\Big((k-s)n^{\frac{5}{2}}\log^{\frac{9}{4}}n/k^{\frac{3}{4}} + nk\Big) \leq O\Big(k^{\frac{1}{4}}n^{\frac{5}{2}}\log^{\frac{9}{4}} n \Big)$, as $L = O\Big(n^{\frac{5}{2}}\log^{\frac{5}{4}}n/k^{\frac{3}{4}}\Big)$ and $k=o(n^2)$.
	}



\vspace{-0.15in}
\section{Conclusion and Open Problems}
\label{sec:conclusion}
\onlyShort{\vspace{-0.1in}}

We studied the message complexity of information spreading in dynamic networks.
\onlyLong{While time complexity has been studied more intensely, understanding the message complexity
 in various dynamic network models is likely to shed light on the time complexity as well.}
Several open questions arise from our work. One key question is that we do note have tight bounds
on the amortized message complexity of unicast under the strongly adaptive adversary (when not charging the adversary for topological changes). The only known bounds are the trivial $O(n^3)$ upper and $\Omega(n)$ lower bounds.

A contribution of our work is introducing the adversary-competitive message complexity which is useful for studying algorithmic
performance in dynamic networks as a function of the dynamism. We were able to show an optimal amortized message bound
for unicast in this model for both the single-source and multi-source setting, when the number of tokens is large. However,
when the number of tokens is small (say $n$) and they start from multiple sources (an important special case is
one token starts from each node), we do not have a good bound. We were able to show only a $o(n^2)$ amortized bound
under a weaker (oblivious) adversary. Improving this bound for oblivious adversary  is an interesting open problem or showing a non-trivial bound for the strongly adaptive adversary is an interesting open problem. \onlyLong{In the case of oblivious adversary, we assumed the number of source nodes and the number of tokens as inputs. It would nice if one can try to relax the assumptions.    
 Also, developing efficient protocols for dynamic networks
that perform well under the adversary-competitive measure for various problems is an interesting research goal.}

\vspace{1cm}
\bibliographystyle{abbrv}
\bibliography{tokenforwarding,sym_diff_sampling,refs-dynamic,dynamic}

\end{document}